\documentclass{CSML}
\pdfoutput=1

\usepackage{lastpage}

\lmcsheading{}{1--\pageref{LastPage}}{}{}%
{Feb.~02, 2017}{Jan.~16, 2018}{}

\usepackage[utf8]{inputenc}
\usepackage[T1]{fontenc}
\usepackage{amsmath,amssymb,algorithm}
\usepackage[noend]{algorithmic}
\usepackage{hyperref}
\hypersetup{hidelinks}

\floatname{algorithm}{Alg.}

\providecommand*{\seq}[3]{\ensuremath{#1_{#2}, \dotsc, #1_{#3}}}
\providecommand*{\nat}[0]{\ensuremath{\mathbb{N}}}
\providecommand*{\abs}[1]{\ensuremath{\lvert #1 \rvert}}
\providecommand*{\SBox}[0]{\ensuremath{{\scriptstyle \Box}}}

\DeclareMathOperator{\sol}{sol}
\DeclareMathOperator{\rk}{rk}
\DeclareMathOperator{\push}{push}
\DeclareMathOperator{\syn}{alph}
\DeclareMathOperator{\supp}{supp}
\DeclareMathOperator{\unw}{unw}
\DeclareMathOperator{\ren}{ren}

\relpenalty10000
\binoppenalty10000
\allowdisplaybreaks

\keywords{pushing --- weighted tree automaton --- minimization ---
  equivalence testing}
\ACMCCS{[{\bf Theory of computation}]: Formal languages and automata
  theory --- Tree languages; [{\bf Theory of computation}]: Formal
  languages and automata --- Automata extensions --- Quantitative
  automata}
\amsclass{68Q45, 68Q25}

\def\eg{{\emph{e.g.}}}
\def\ie{{\emph{i.e.}}}

\begin{document}

\title[Pushing for weighted tree automata]
{Pushing for weighted tree automata\rsuper* \\ --- dedicated to the
    memory of \textsc{Zolt\'an \'Esik} (1951--2016) ---}
\titlecomment{{\lsuper*}This is a revised and extended version of
  [\textsc{Maletti, Quernheim}: \emph{Pushing for weighted 
    tree automata}. Proc.\@ 36th Int.\@ Conf.\@ Mathematical
  Foundations of Computer Science, LNCS~6907, p.~460--471, 2011].} 

\author[Th.~Hanneforth]{Thomas Hanneforth}
\address{Universit\"at Potsdam, Human Sciences Faculty, Department
  Linguistik \hspace{0.25\linewidth} \linebreak
  Karl-Liebknecht-Str.~24--25, 14476 Potsdam, Germany} 
\email{thomas.hanneforth@uni-potsdam.de}

\author[A.~Maletti]{Andreas Maletti}
\address{Universit\"at Leipzig, Faculty of Mathematics and Computer
  Science, Institute of Computer Science
  PO~box~100\,920, 04009 Leipzig, Germany}
\email{maletti@informatik.uni-leipzig.de}

\author[D.~Quernheim]{Daniel Quernheim}
\address{Universit\"at Stuttgart, Institute for Natural Language
  Processing \hspace{.3\linewidth} \linebreak
  Pfaffenwaldring~5b, 70569 Stuttgart, Germany}
\email{daniel.quernheim@ims.uni-stuttgart.de}
\thanks{Financially supported by the German Research Foundation~(DFG)
  grant MA\,/\,4959\,/\,1-1.}

\begin{abstract}
  A weight normalization procedure, commonly called pushing, is
  introduced for weighted tree automata~(wta) over commutative
  semifields.  The normalization preserves the recognized weighted
  tree language even for nondeterministic wta, but it is most useful
  for bottom-up deterministic wta, where it can be used for
  minimization and equivalence testing.  In both applications a
  careful selection of the weights to be redistributed followed by
  normalization allows a reduction of the general problem to the
  corresponding problem for bottom-up deterministic unweighted tree
  automata.  This approach was already successfully used by
  \textsc{Mohri} and \textsc{Eisner} for the minimization of
  deterministic weighted string automata.  Moreover, the new
  equivalence test for two wta $M$~and~$M'$ runs in time~$\mathcal
  O \bigl((\abs M + \abs{M'}) \log {(\abs Q + \abs{Q'})} \bigr)$,
  where $Q$~and~$Q'$ are the states of $M$~and~$M'$, respectively,
  which improves the previously best run-time~$\mathcal O \bigl(\abs M
  \cdot \abs{M'} \bigr)$.
\end{abstract}

\maketitle

\section{Introduction}
Weighted tree automata~\cite{fulvog09} have recently found various
applications in fields as diverse as natural language and XML
processing~\cite{knimay09}, system verification~\cite{jac11}, and
pattern recognition.  Most applications require efficient algorithms
for basic manipulations of tree automata such as
determinization~\cite{bucmayvog10}, inference~\cite{mayknivog10}, and 
minimization~\cite{hogmalmay08,hogmalmay07d}.  For example, in the
system verification domain the properties to be verified are typically
easily expressed as a formula in a logic.  It is well-known~\cite{thawri68}
that tree automata are as expressive as monadic second-order logic
with two successors.  This celebrated result was recently generalized
to the weighted setting for various weight
structures~\cite{drovog06,man08,drogotmarmei11,vogdroher16}, so
quantitative specifications are readily available.  However, one of
the main insights gained in the development of the \textsc{mona}
toolkit~\cite{klamol01} (or the \textsc{spass}
system~\cite{weidimfiekumsudwis09}) was that the transformation of a
formula into an equivalent tree automaton heavily relies on the
minimization of the constructed deterministic tree automata as the
automata otherwise grow far too quickly.  Similarly, a major inference
setup, also used in the synthesis subfield in system verification, is
\textsc{Angluin}'s minimally adequate teacher setup~\cite{ang87}.  In
this setup, the learner is given access to an oracle that correctly
supplies coefficients of trees in the weighted tree language to be
learned, which are called coefficient queries, and certificates that
the proposed weighted tree automaton indeed represents the weighted
tree language to be learned, which are called equivalence queries.  In
implementations of the oracle the latter queries are typically
answered by equivalence tests.

As already mentioned, quantitative models have recently enjoyed a lot
of attention.  For example, in natural language processing, weighted
devices are often used to model probabilities, cost functions, or
other features.  In this contribution, we consider
pushing~\cite{moh97,eis03} for weighted 
tree automata~\cite{berreu82,fulvog09} over commutative
semifields~\cite{hebwei98,gol99}.  Roughly speaking, pushing moves
transition weights along a path.  If the weights are properly
selected, then pushing can be used to canonicalize a (bottom-up)
deterministic weighted tree automaton~\cite{bor04b}.  The obtained
canonical representation has the benefit that it can be minimized
using unweighted minimization, in which the weight is treated as a
transition label.  This strategy has successfully been employed
in~\cite{moh97,eis03} for deterministic weighted (finite-state) string
automata, and similar approaches have been used to minimize sequential
transducers~\cite{cho03} and bottom-up tree
transducers~\cite{friseiman11}.  Here we adapt 
the strategy for tree automata.  In particular, we
improve the currently best minimization algorithm~\cite{mal08e} for
a deterministic weighted tree automaton~$M$ with states~$Q$ from
$\mathcal O \bigl(\abs M \cdot \abs Q \bigr)$ to~$\mathcal O
\bigl(\abs M \log {\abs Q} \bigr)$, which coincides with the
complexity of minimization in the unweighted case~\cite{hogmalmay08}.
The improvement is achieved by a careful selection of the signs of
life~\cite{mal08e}.  Intuitively, a sign of life for a state $q$ is a
context that takes $q$ into a final state.  In~\cite{mal08e} the signs
of life are computed by a straightforward exploration algorithm, which
is very efficient, but does not guarantee that states that are later
checked for equivalence receive the same sign of life.  During the
(pair-wise) equivalence checks in~\cite{mal08e} the evaluation of the
weight of a state in the sign of life of another state thus becomes
unavoidable, which causes the increased complexity.  In this
contribution, we precompute an equivalence relation, which, in
general, is still coarser than the state equivalence to be determined,
but equivalent states in this equivalence permit the same sign of
life.  Then we determine a sign of life for each equivalence class.
Later we only refine this equivalence relation to obtain the state
equivalence, so each state will only be evaluated in its sign of life
and this evaluation can be precomputed.  Moreover, the weights
obtained in this evaluation, also called pushing weights, allow a
proper canonicalization in the sense that equivalent states will have
exactly the same weights on corresponding transitions after pushing.
This property sets our algorithm apart from Algorithm~1
of~\cite{mal08e} and allows us to rely on unweighted
minimization~\cite{hogmalmay08}.  Our pushing procedure, which is
defined for general (potentially nondeterministic) weighted tree
automata, always preserves the semantics, so it might also be useful
in other setups.

Secondly, we apply pushing to the problem of testing equivalence.  The
currently fastest algorithm~\cite{drehogmal09b} for checking
equivalence of two deterministic weighted tree automata
$M$~and~$M'$ runs in time~$\mathcal O \bigl(\abs M \cdot \abs{M'}
\bigr)$.  It is well known that two minimal deterministic weighted
tree automata $M$~and~$M'$ are equivalent if and only if they can be
obtained from each other by a pushing operation (with proper pushing
weights).  In other words, equivalent automata $M$~and~$M'$ have the
same transition structure, but their transition weights can differ by
consistent factors.  We extend our approach to minimization also to
equivalence testing, so we again carefully determine the pushing
weight and the sign of life of each state~$q$ of~$M$ such that it
shares the sign of life with all equivalent states of~$M$ but also
with all corresponding states in~$M'$.  This allows us to minimize
both input automata and then treat the obtained automata as unweighted
automata and test them for isomorphism.  This approach reduces the
run-time complexity to~$\mathcal O \bigl((\abs M + \abs{M'}) \log
{(\abs Q + \abs{Q'})} \bigr)$, where $Q$~and~$Q'$ are the states of
$M$~and~$M'$, respectively. 

\section{Preliminaries}
\label{sec:Prel}
We write~$\nat$ for the set of all nonnegative integers and $[1, u]$~for
its subset $\{ i \mid 1 \leq i \leq u\}$ given $u \in \nat$.  The
$k$-fold \textsc{Cartesian} product of a set~$Q$ is 
written as~$Q^k$, and the empty tuple~$() \in Q^0$ is often written
as~$\varepsilon$.  Every finite and nonempty set is also called
alphabet, of which the elements are called symbols.  A ranked
alphabet~$(\Sigma, \mathord{\rk})$ consists of an alphabet~$\Sigma$
and a mapping~$\mathord{\rk} \colon \Sigma \to \nat$, which assigns a
rank to each symbol.  If the ranking~`$\rk$' is obvious from the
context, then we simply write~$\Sigma$ for the ranked alphabet.  For
each $k \in \nat$, we let~$\Sigma_k$ be the set $\{\sigma \in \Sigma
\mid \rk(\sigma) = k\}$ of $k$-ary symbols of~$\Sigma$.  Moreover, we
let $\Sigma(Q) = \{ \sigma w \mid \sigma \in \Sigma,\, w \in
Q^{\rk(\sigma)}\}$.  The set~$T_\Sigma(Q)$ of all $\Sigma$-trees
indexed by~$Q$ is inductively defined to be the smallest set~$T$ such
that $Q \subseteq T$ and $\Sigma(T) \subseteq T$.  Instead
of~$T_\Sigma(\emptyset)$ we simply write~$T_\Sigma$.  The size~$\abs
t$ of a tree~$t \in T_\Sigma(Q)$ is inductively defined by $\abs q =
1$ for every $q \in Q$ and $\abs{\sigma(\seq t1k)} = 1 + \sum_{i =
  1}^k \abs{t_i}$ for every $k \in \nat$, $\sigma \in \Sigma_k$, and
$\seq t1k \in T_\Sigma(Q)$.  To increase readability, we often omit
quantifications like ``for all $k \in \nat$'' if they are obvious from
the context. 

We reserve the use of a special symbol~$\SBox$ that is not an element
in any considered alphabet.  Its function is to mark a designated
position in certain trees called contexts.  Formally, the
set~$C_\Sigma(Q)$ of all $\Sigma$-contexts indexed by~$Q$ is defined
as the smallest set~$C$ such that $\SBox \in C$ and $\sigma(\seq
t1{i-1}, c, \seq t{i+1}k) \in C$ for every $\sigma \in \Sigma_k$,
$\seq t1k \in T_\Sigma(Q)$, $i \in [1, k]$, and $c \in C$.  As before,
we simplify~$C_\Sigma(\emptyset)$ to~$C_\Sigma$.  In simple words, a
context is a tree, in which the special symbol~$\SBox$ occurs exactly
once and at a leaf position.  Note that $C_\Sigma(Q) \cap T_\Sigma(Q)
= \emptyset$, but $C_\Sigma(Q) \subseteq T_\Sigma(Q \cup \{\SBox\})$,
which allows us to treat contexts like trees.  Given $c \in
C_\Sigma(Q)$ and $t \in T_\Sigma(Q \cup \{\SBox\})$, the tree~$c[t]$
is obtained from~$c$ by replacing the unique occurrence of~$\SBox$
in~$c$ by~$t$.  In particular, $c[c'] \in C_\Sigma(Q)$ given that $c,
c' \in C_\Sigma(Q)$.

A commutative semiring~\cite{hebwei98,gol99} is a tuple $(S,
\mathord+, \mathord\cdot, 0, 1)$ such that $(S, \mathord+, 0)$ and
$(S, \mathord\cdot, 1)$ are commutative monoids and $s \cdot 0 = 0$
and $s \cdot (s_1 + s_2) = (s \cdot s_1) + (s \cdot s_2)$ for all $s,
s_1, s_2 \in S$ (\ie, $\cdot$~distributes over~$+$).  It is a
commutative semifield if $(S \setminus \{0\}, \mathord\cdot, 1)$ is a
commutative group (\ie, in addition, for every $s \in S \setminus
\{0\}$ there exists $s^{-1} \in S$ such that $s \cdot s^{-1} = 1$).
Typical commutative semifields include
\begin{itemize}
\item the \textsc{Boolean} semifield $\mathbb B = (\{0, 1\},
  \mathord{\max}, \mathord{\min}, 0, 1)$,
\item the field~$(\mathbb{Q}, \mathord{+}, \mathord{\cdot}, 0, 1)$ of
  rational numbers, and 
\item the \textsc{Viterbi} semifield $(\mathbb{Q}_{\geq 0},
  \mathord{\max}, \mathord{\cdot}, 0, 1)$, where $\mathbb{Q}_{\geq 0}
  = \{ q \in \mathbb{Q} \mid q \geq 0\}$.
\end{itemize}
Given a mapping $f \colon A \to S$, we write $\supp(f)$ for the set
$\{a \in A \mid f(a) \neq 0\}$ of elements that are mapped via~$f$ to
a non-zero semiring element.  

\begin{quote}
  \emph{For the rest of the paper, let $(S, \mathord+, \mathord\cdot,
    0, 1)$ be a commutative semifield.}\footnote{Clearly, weighted
    tree automata can also be defined for semirings or even more
    general weight structures, but already minimization for
    deterministic finite-state string automata becomes NP-hard for
    simple semirings that are not semifields
    (see~\cite[Section~3]{eis03}).}
\end{quote}

A weighted tree
automaton~\cite{bozlou83,boz99,kui98,borvog03,bor04b,fulvog09} (for
short: wta) is a tuple $M = (Q, \Sigma, \mu, F)$, in which
\begin{itemize}
\item $Q$~is an alphabet of states,
\item $\Sigma$~is a ranked alphabet of input symbols,
\item $\mu \colon \Sigma(Q) \times Q \to S$ assigns a weight to each
  transition, and 
\item $F \subseteq Q$ is a set of final states.
\end{itemize}
We often write elements of~$T_\Sigma(Q) \times Q$ as~$t \to q$ instead
of~$(t, q)$.  The size~$\abs M$ of the wta~$M$ is
\[ \abs M = \sum_{t \to q \in \supp(\mu)} (\abs t + 1) \enspace. \] We
extend the transition weight assignment~$\mu$ to a mapping $h_\mu
\colon T_\Sigma(Q) \times Q \to S$ by
\begin{align*}
  h_\mu(p \to q) &=
  \begin{cases}
    1 & \text{if } p = q \\
    0 & \text{otherwise}
  \end{cases} \\
  h_\mu(\sigma(\seq t1k) \to q) &= \sum_{\seq q1k \in Q} \mu(\sigma(\seq
  q1k) \to q) \cdot \prod_{i = 1}^k h_\mu(t_i \to q_i)
\end{align*}
for all $p, q \in Q$, $\sigma \in \Sigma_k$, and $\seq t1k \in
T_\Sigma(Q)$.  The wta~$M$ recognizes the weighted tree language~$M
\colon T_\Sigma \to S$ such that $M(t) = \sum_{q \in F} h_\mu(t \to
q)$ for every $t \in T_\Sigma$.  Two wta $M$~and~$M'$ are equivalent
if their recognized weighted tree languages coincide.  The unweighted
(finite-state) tree
automaton~\cite{gecste84,gecste97,comdaugillodjaclugtistom07} (for
short: fta) corresponding to~$M$ is $\unw(M) = (Q, \Sigma, \supp(\mu),
F)$.\footnote{An fta computes in the same manner as a wta over
  the \textsc{Boolean} semifield~$\mathbb B$.}  We note that $\supp(M)
\subseteq L(\unw(M))$, where $L(\unw(M))$~is the tree language
recognized by the fta~$\unw(M)$. 

The wta~$M = (Q, \Sigma, \mu, F)$ is \emph{(bottom-up) deterministic}
(or a dwta) if for every $t \in \Sigma(Q)$ there exists at most one~$q
\in Q$ such that $t \to q \in \supp(\mu)$.  In other words, a wta~$M$
is deterministic if and only if $\unw(M)$ is bottom-up deterministic.
In a dwta we can (without loss of information) treat $\mu$~and~$h_\mu$
as partial mappings $\mu \colon \Sigma(Q) \dasharrow Q \times S$ and
$h_\mu \colon T_\Sigma(Q) \dasharrow Q \times S$.  We use
$\mu^{(1)}$~and~$\mu^{(2)}$ as well as $h_\mu^{(1)}$~and~$h_\mu^{(2)}$
for the corresponding projections to the first and second output
component, respectively (\eg, $\mu^{(1)} \colon \Sigma(Q) \dasharrow
Q$ and $\mu^{(2)} \colon \Sigma(Q) \dasharrow S$).  To avoid
complicated distinctions, we treat undefinedness like a value (\ie,
it is equal to itself, but different from every other value).  We
observe that $\supp(M) = L(\unw(M))$ for each dwta~$M$.\footnote{The
  statement holds because each commutative semifield is zero-divisor
  free~\protect{\cite[Lemma~1]{bor03}}.}  Moreover, the restriction to
final states instead of final weights in the definition of a wta does
not restrict the expressive power~\cite[Lemma~6.1.4]{bor04b}, which
applies to both wta and dwta.  In addition, the transformation of a
wta with final weights into an equivalent wta with final states does
not add additional states, so all our results also apply to wta with
final weights.

An equivalence relation~$\equiv$ on a set~$A$ is a reflexive,
symmetric, and transitive subset of~$A^2$.  The equivalence
class (or block)~$[a]_{\mathord{\equiv}}$ of the element~$a \in A$ is
$\{ a' \in A \mid a \equiv a'\}$, and we let $(A' / \mathord{\equiv})
= \{ [a']_{\mathord{\equiv}} \mid a' \in A'\}$ for every $A' \subseteq
A$.  Whenever~$\equiv$ is obvious from the context, we simply omit it.
The equivalence~$\equiv$ respects a set~$A' \subseteq A$ if $[a]
\subseteq A'$ or $[a] \subseteq A \setminus A'$ for every $a \in A$
(\ie, each equivalence class is either completely in~$A'$ or
completely outside~$A'$).

Let $M = (Q, \Sigma, \mu, F)$ be a dwta.  An equivalence relation
$\mathord{\equiv} \subseteq Q^2$ is a congruence (of~$M$) if
$\mu^{(1)}(\sigma(\seq q1k)) \equiv \mu^{(1)}(\sigma(\seq{q'}1k))$ for
every $\sigma \in \Sigma_k$ and all equivalent states $q_1 \equiv
q'_1, \dotsc, q_k \equiv q'_k$.  Note that this definition of
congruence completely disregards the weights, which yields that
$\equiv$~is a congruence for~$M$ if and only if $\equiv$~is a
congruence for~$\unw(M)$.  Two states $q_1, q_2 \in Q$ are weakly
equivalent, written as $q_1 \sim_M q_2$, if $h_\mu^{(1)}(c[q_1]) \in F$
if and only if $h_\mu^{(1)}(c[q_2]) \in F$ for all contexts $c \in
C_\Sigma(Q)$.  In other words, weak equivalence coincides with
classical equivalence~\cite[Definition~II.6.8]{gecste84}
for~$\unw(M)$.  Consequently, the weak equivalence relation~$\sim_M$
is actually a congruence of~$M$ that
respects~$F$~\cite[Theorem~II.6.10]{gecste84}.  The weak equivalence
relation~$\sim_M$ can be computed in time~$\mathcal O \bigl(\abs M \log
{\abs Q} \bigr)$~\cite{hogmalmay08}.  Finally, two states
are (strongly) equivalent, written as $q_1 \equiv_M q_2$ if there
exists a factor $s \in S \setminus \{0\}$ such that for all $c \in
C_\Sigma(Q)$ we have 
\[ h_\mu^{(2)}(c[q_1]) \cdot \chi_F \bigl(h_\mu^{(1)}(c[q_1]) \bigr) =
s \cdot h_\mu^{(2)}(c[q_2]) \cdot \chi_F \bigl(h_\mu^{(1)}(c[q_2])
\bigr) \enspace, \] 
where $\chi_F \colon Q \to \{0, 1\}$ is the characteristic function
of~$F$; \ie, $F(q) = 1$ if and only if $q  \in F$ for all $q \in Q$.
The equivalence relation~$\equiv_M$ is called the
\textsc{Myhill-Nerode} equivalence relation~\cite[Definition~3]{mal08e}.
It is also a congruence that respects~$F$~\cite[Lemma~4]{mal08e}.  If
$M$~is clear from the context, then we just write~$\equiv$ instead
of~$\equiv_M$.

\section{Signs of life}
\label{sec:sol}
First, we demonstrate how to efficiently compute signs of life
(Definition~\ref{df:SoL}), which are evidence that a final state can
be reached.  Together with these signs of life we also compute a
pushing weight for each state (Section~\ref{sec:Push}).  Our
Algorithm~\ref{alg:sol} is a straightforward extension
of~\cite[Algorithm~1]{mal08e} that computes on equivalence classes of
states (with respect to a congruence that respects finality) instead
of states.\footnote{Note that our algorithm is not simply the
  previous algorithm executed on the quotient dwta with respect to the
  congruence.  The original dwta is used essentially in the
  computation of the pushing weights.} This change guarantees that
equivalent states receive the same sign of life, which is an essential
requirement for the algorithms in Sections
\ref{sec:Min}~and~\ref{sec:equiv}.  

Before we start we need to recall the definition of a sign of
life~\cite{mal08e}.  In addition, we recall the relevant properties
that we use in our algorithm.  For the rest of this section, let $M =
(Q, \Sigma, \mu, F)$ be a dwta.

\begin{defi}[{\protect{\cite[Section~2]{mal08e}}}]
  \label{df:SoL}
  A context~$c \in C_\Sigma(Q)$ is a \emph{sign of life} for the
  state~$q \in Q$ if $h_\mu^{(1)}(c[q]) \in F$.  Any state that
  has a sign of life is \emph{live}; otherwise it is \emph{dead}.
\end{defi}

The following lemma justifies that we can compute signs of life for
equivalence classes of congruences that respect~$F$ instead of
individual states since all states of such an equivalence class share
the same signs of life.

\begin{lem}[{\protect{see~\cite[Lemma~9]{mal08e}}}]
  \label{lm:help}
  We have $\mathord{\cong} \subseteq \mathord{\sim_M}$ for every
  congruence~$\cong$ that respects~$F$.  In particular,
  $\mathord{\equiv_M} \subseteq \mathord{\sim_M}$.  Moreover, every
  sign of life for~$q \in Q$ is also a sign of life for every~$q' \in
  [q]_{\cong}$.
\end{lem}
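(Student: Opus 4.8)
The plan is to reduce all three statements to a single compatibility lemma: for every congruence~$\mathord\cong \subseteq Q^2$ of~$M$ and all states $q_1 \cong q_2$ and every context~$c \in C_\Sigma(Q)$, the values $h_\mu^{(1)}(c[q_1])$ and $h_\mu^{(1)}(c[q_2])$ are either both undefined or both defined and $\cong$-equivalent. I would prove this by induction on the structure of~$c$. The base case $c = \SBox$ is immediate, since then $h_\mu^{(1)}(c[q_i]) = q_i$ and $q_1 \cong q_2$ by assumption. For $c = \sigma(\seq t1{i-1}, c', \seq t{i+1}k)$ with $c' \in C_\Sigma(Q)$, I distinguish cases: if the induction hypothesis for~$c'$ yields ``both undefined'', or if $h_\mu^{(1)}(t_j)$ is undefined for some $j \neq i$, then $h_\mu^{(1)}(c[q_1])$ and $h_\mu^{(1)}(c[q_2])$ are both undefined; otherwise all involved states are defined, they coincide for the subtrees~$t_j$ with $j \neq i$, and are $\cong$-equivalent at position~$i$, so the defining clause of a congruence of~$M$ together with reflexivity of~$\cong$ shows that $\mu^{(1)}$ applied to the two resulting elements of~$\Sigma(Q)$ --- that is, $h_\mu^{(1)}(c[q_1])$ and $h_\mu^{(1)}(c[q_2])$ --- is again undefined on both or defined on both with $\cong$-equivalent values.

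Granting this lemma, the first claim follows at once: for $q_1 \cong q_2$ and any $c \in C_\Sigma(Q)$, the two values $h_\mu^{(1)}(c[q_1])$ and $h_\mu^{(1)}(c[q_2])$ lie in one $\cong$-block (or are both undefined, hence neither lies in~$F$). Since $\cong$ respects~$F$, that block is either contained in~$F$ or disjoint from~$F$, so $h_\mu^{(1)}(c[q_1]) \in F$ if and only if $h_\mu^{(1)}(c[q_2]) \in F$; as $c$ ranges over all contexts this is exactly $q_1 \sim_M q_2$, proving $\mathord\cong \subseteq \mathord{\sim_M}$. The inclusion $\mathord{\equiv_M} \subseteq \mathord{\sim_M}$ is then the instance $\mathord\cong = \mathord{\equiv_M}$, which is legitimate because $\equiv_M$ is a congruence of~$M$ that respects~$F$ (recalled in Section~\ref{sec:Prel}, \cf~\cite[Lemma~4]{mal08e}).

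For the last statement, let $c$ be a sign of life for~$q$ and let $q' \in [q]_{\cong}$, so $q \cong q'$. By the first claim $q \sim_M q'$, hence $h_\mu^{(1)}(c[q]) \in F$ implies $h_\mu^{(1)}(c[q']) \in F$; that is, $c$ is a sign of life for~$q'$. The only genuinely delicate point in the argument is the bookkeeping of partiality in the compatibility lemma, which is handled cleanly by the convention that undefinedness is a value that is $\cong$-equivalent only to itself; everything else is routine. One could alternatively obtain the first claim by invoking the classical fact that the state equivalence of an fta is the coarsest congruence respecting the set of final states and applying it to~$\unw(M)$.
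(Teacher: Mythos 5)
Your proposal is correct, but it takes a more self-contained route than the paper. The paper's proof is essentially a citation: it invokes the classical theorem of G\'ecseg and Steinby that $\sim_M$ (being classical state equivalence of $\unw(M)$, with the paper's weight-blind notion of congruence) is the \emph{coarsest} congruence respecting~$F$, which immediately gives $\mathord{\cong} \subseteq \mathord{\sim_M}$ and, since $\equiv_M$ is a congruence respecting~$F$, also $\mathord{\equiv_M} \subseteq \mathord{\sim_M}$; the sign-of-life statement then follows from $[q]_{\cong} \subseteq [q]_{\sim_M}$ and the definition of~$\sim_M$, exactly as in your last paragraph. You instead re-derive the inclusion from first principles via your compatibility lemma ($h_\mu^{(1)}(c[q_1])$ and $h_\mu^{(1)}(c[q_2])$ both undefined, or both defined and $\cong$-equivalent, by induction on~$c$), which is in effect the standard proof of one direction of that classical theorem, carried out with careful bookkeeping of partiality consistent with the paper's convention that undefinedness is a value equivalent only to itself. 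Both arguments are sound; yours buys independence from the cited result (and makes explicit why ``respects~$F$'' is exactly what is needed in the final step), while the paper's is shorter and fits its overall strategy of delegating to known unweighted fta machinery --- an alternative you yourself note at the end.
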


\begin{proof}
  It is known that~$\sim_M$ is the coarsest congruence that
  respects~$F$~\cite[Theorem II.6.10]{gecste84}.\footnote{Mind
    that~$\sim_M$ coincides with classical equivalence on~$\unw(M)$
    and that our notion of congruence completely disregards the
    weights.}  Consequently, $\mathord{\cong} \subseteq
  \mathord{\sim_M}$ and $\mathord{\equiv_M} \subseteq
  \mathord{\sim_M}$ since we already remarked that $\equiv_M$~is also
  a congruence that respects~$F$.  Based on the definition of~$\sim_M$
  it is trivial to see that all elements of an equivalence class
  of~$\sim_M$ share the same signs of life~\cite[Lemma~9]{mal08e}.
  Since $[q]_{\cong} \subseteq [q]_{\sim_M}$ we obtain the desired
  statement.
\end{proof}

\begin{algorithm}[t]
  \begin{algorithmic}[2]
    \REQUIRE dwta $M = (Q, \Sigma, \mu, F)$ and
    congruence~$\mathord{\cong} \subseteq Q^2$ of~$M$ that respects~$F$
    \ENSURE return live state partition $(L/\mathord{\cong})$ and the
    mappings $\mathord{\sol} \colon (L/\mathord{\cong}) \to
    C_\Sigma(Q)$ and $\lambda \colon L \to S \setminus \{0\}$ such
    that $\lambda(q) = h_\mu^{(2)}(\sol([q]_{\cong})[q])$ for every $q \in L$
    \smallskip\hrule\smallskip

    \STATE $L \gets (F/\mathord{\cong})$ 
      \COMMENT{final states are trivially live \dots}%
    \smallskip
    \FORALL{$B \in L$}
      \STATE $\sol(B) \gets \SBox$ \label{ln:final1}
        \COMMENT{\dots\ with the trivial context as sign of life\dots}%
      \STATE $\lambda(q) \gets 1$ for all $q \in B$
        \label{ln:final2}
        \COMMENT{\dots\ and trivial pushing weight}%
    \ENDFOR

    \smallskip 
    \STATE $U \gets L$
      \COMMENT{start from the final states}
    \smallskip
 
    \WHILE{$U \neq \emptyset$}
      \STATE take~$B \in U$ and $U \gets U \setminus \{B\}$ 
        \COMMENT{get an unexplored class}%
      \smallskip
      \FORALL{$\sigma(\seq q1k) \in \Sigma(Q)$ such that
        $\mu^{(1)}(\sigma(\seq q1k)) \in B$} 
        \smallskip
        \FORALL{$i \in [1, k]$ such that $[q_i]_{\cong} \notin L$}
          \STATE $c \gets \sigma(\seq q1{i-1}, \SBox, \seq q{i+1}k)$
            \COMMENT{prepare context}%
          \STATE $L \gets L \cup \{[q_i]_{\cong}\}$;  $U \gets U \cup
            \{[q_i]_{\cong}\}$
            \COMMENT{add class to $L$~and~$U$}%
          \STATE $\sol([q_i]_{\cong}) \gets
            \sol(B)[c]$
            \label{ln:ind1}
            \COMMENT{add transition to target block's sign of life}%
          \STATE $\lambda(q) \gets \lambda(\mu^{(1)}(c[q])) \cdot
            \mu^{(2)}(c[q])$ for all $q \in [q_i]_{\cong}$
            \COMMENT{multiply transition weight}%
            \label{ln:ind2}
        \ENDFOR
      \ENDFOR
    \ENDWHILE
    \smallskip 
    \RETURN{($L, \mathord{\sol}, \lambda)$}
  \end{algorithmic}
  \caption{\textsc{ComputeSoL}: Compute a sign of life and its weight
    for each state.}
  \label{alg:sol}
\end{algorithm}

Algorithm~\ref{alg:sol} simply attempts to reach all states from the
final states computing a context that takes the state to a final state
(\ie, a sign of life) as well as its weight in the process.  Due to
Lemma~\ref{lm:help} the signs of life are computed for equivalence
classes (or blocks) instead of individual states.  Now let us explain
Algorithm~\ref{alg:sol} in detail.  Every final 
state~$q \in F$ is trivially live as evidenced by the trivial sign of
life~$\SBox$.  Since the congruence~$\cong$ respects~$F$, the
set~$(F/\mathord{\cong})$ contains equivalence classes that contain
only final states.  We set the sign of life for each class to~$\SBox$
[see Line~\ref{ln:final1}], and for each involved state~$q$ we set its
pushing weight to~$1$ [see Line~\ref{ln:final2}].  Overall, this
initialization takes time~$\mathcal O \bigl(\abs{F} \bigr)$.  Next, we
add all those blocks to the live states~$L$ and to the blocks~$U$ yet
to be explored.  As long as there are still unexplored blocks, we
select a block~$B$ from~$U$ and remove it from~$U$.  Then we consider
all transitions that end in a state that belongs to the block~$B$ and 
check whether it contains a source state that is not yet present
in~$L$.  For each such source state~$q_i$, we add its equivalence
class~$[q_i]_{\cong}$ to both $L$~and~$U$.  Then we set the sign of
life for this class to the sign of life for~$B$ extended by the
considered transition [see Line~\ref{ln:ind1}].  Finally, we select
each state~$q$ from~$[q_i]_{\cong}$ and compute a pushing weight by
multiplying the weight of the currently considered transition
with~$q_i$ replaced by~$q$ to the already computed pushing weight for
the target state reached by the modified transition [see
Line~\ref{ln:ind2}].

\begin{thm}
  \label{thm:sol}
  Algorithm~\ref{alg:sol} is correct and runs in time~$\mathcal O
  \bigl(\abs M + \abs Q \bigr)$.
\end{thm}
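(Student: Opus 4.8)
The plan is to check, in turn, termination, that the returned $L$, $\sol$, and $\lambda$ satisfy the specification, and the running time. \emph{Termination} is immediate: a block $[q]_{\cong}$ is put into $U$ only when it is first added to $L$, and the guard ``$[q_i]_{\cong} \notin L$'' forbids re-adding it, so the \textbf{while}-loop body runs at most $\abs{Q/\mathord\cong} \le \abs Q$ times. The core of correctness is the invariant that for every $B \in L$ and $q \in B$ we have $h_\mu^{(1)}(\sol(B)[q]) \in F$ (so $\sol(B)$ is a sign of life for each state of $B$) and $\lambda(q) = h_\mu^{(2)}(\sol(B)[q]) \in S \setminus \{0\}$. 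It holds after initialization since then $q \in F$, $\sol(B) = \SBox$ and $\lambda(q) = 1$. For the step, suppose $[q_i]_{\cong}$ is added via a transition $\sigma(\seq q1k) \to p$ with $p \in B$ and context $c = \sigma(\seq q1{i-1}, \SBox, \seq q{i+1}k)$. For $q \in [q_i]_{\cong}$ the one-level tree $c[q]$ evaluates to $h_\mu(c[q]) = (\mu^{(1)}(c[q]), \mu^{(2)}(c[q]))$; since $\cong$ is a congruence and $q \cong q_i$, the state $p' := \mu^{(1)}(c[q])$ is defined and $\cong$-equivalent to $p$, so $p' \in B$, and $\mu^{(2)}(c[q]) \neq 0$ because $c[q] \to p' \in \supp(\mu)$ and the semifield has no zero divisors. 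Using the standard dwta decomposition $h_\mu(d[t]) = \bigl(h_\mu^{(1)}(d[h_\mu^{(1)}(t)]),\, h_\mu^{(2)}(d[h_\mu^{(1)}(t)]) \cdot h_\mu^{(2)}(t)\bigr)$ (by induction on the context $d$), with $d = \sol(B)$ and $t = c[q]$, we get $h_\mu^{(1)}(\sol(B)[c[q]]) = h_\mu^{(1)}(\sol(B)[p']) \in F$ and $h_\mu^{(2)}(\sol(B)[c[q]]) = \lambda(p') \cdot \mu^{(2)}(c[q])$ by the invariant for $B$; since $\sol([q_i]_{\cong})[q] = \sol(B)[c][q] = \sol(B)[c[q]]$, this is exactly the value assigned to $\lambda(q)$, and it is nonzero, so the invariant is preserved. (That $\sol(B)$ is a valid sign of life for \emph{every} member of $B$, not just the one whose transition produced it, is where Lemma~\ref{lm:help} is used.)

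It remains to see that the final $L$ is exactly the partition of the live states. One inclusion follows from the invariant: every block ever added to $L$ consists of live states, and by Lemma~\ref{lm:help} the live states are a union of $\cong$-blocks. For the converse I would show by induction on $n$ that every state with a sign of life of size at most $n$ lies in a block of the final $L$. If $n = 1$ the sign of life is $\SBox$, the state is final, and its block is in $(F/\mathord\cong) \subseteq L$. If $n > 1$, a sign of life $c$ of $q$ decomposes as $c = d[\sigma(\seq t1{i-1}, \SBox, \seq t{i+1}k)]$ with $\abs d < n$; evaluating the $t_j$ bottom-up to states $p_j$, the state $p := \mu^{(1)}(\sigma(\seq p1{i-1}, q, \seq p{i+1}k))$ is defined, the corresponding transition is in $\supp(\mu)$, and $d$ is a sign of life of size $<n$ for $p$. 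By the induction hypothesis $[p]_{\cong}$ is in the final $L$, hence it is eventually removed from $U$ and processed, and during that processing the transition $\sigma(\seq p1{i-1}, q, \seq p{i+1}k) \to p$ is examined, putting $[q]_{\cong}$ into $L$.

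For the \emph{running time}, the \textbf{while}-loop runs at most $\abs{Q/\mathord\cong} \le \abs Q$ times; representing blocks by an array mapping states to blocks plus a per-block ``is in $L$'' flag makes the initialization and all set maintenance cost $\mathcal O(\abs Q)$. Because $M$ is a dwta each transition has a unique target, so each transition $t \to q$ is inspected in the inner loops exactly once, when $[q]_{\cong}$ is processed; scanning its $\abs t - 1$ hole positions and forming the required contexts is $\mathcal O(\abs t)$, and if each $\sol$-value is stored by reference (a pointer to the parent block's $\sol$ plus the single-level increment) all $\sol$-assignments together cost $\mathcal O(\abs M)$. The delicate point, which I expect to be the main obstacle, is the line $\lambda(q) \gets \lambda(\mu^{(1)}(c[q])) \cdot \mu^{(2)}(c[q])$ for all $q \in [q_i]_{\cong}$: for a newly discovered block $[q_i]_{\cong}$ with defining context $c$, its members are in bijection with distinct transitions $c[q] \to \mu^{(1)}(c[q]) \in \supp(\mu)$, all having their target in the block currently being processed, and conversely each transition $t \to p$ of $M$ serves in this role for at most $\abs t - 1$ blocks (one per hole position). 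Indexing the transition table so that, given a context, all transitions obtained by filling its hole are retrievable in time proportional to their total size, the cumulative cost of these $\lambda$-updates — which must remain consistent with the $\sol$-values chosen by the BFS — is $\mathcal O(\abs M)$, and summing everything yields the claimed $\mathcal O(\abs M + \abs Q)$.
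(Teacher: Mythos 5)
Your proof is correct and takes essentially the same route as the paper: a loop invariant showing that $\sol(B)$ is a shared sign of life for all members of a block (via the congruence property and Lemma~\ref{lm:help}) with $\lambda(q) = h_\mu^{(2)}(\sol([q]_{\cong})[q])$, completeness by induction on a sign of life of a supposedly missed live state, and the same per-block/per-transition accounting for the $\mathcal O(\abs M + \abs Q)$ bound. Your extra care in charging the $\lambda$-lookups of Line~\ref{ln:ind2} to distinct transitions in $\supp(\mu)$ (using that the congruence forces $c[q] \in \supp(\mu)$ for every block member $q$) makes explicit an implementation point the paper's proof leaves implicit, but it is the same argument in substance.
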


\begin{proof}
  Since our algorithm is similar to the one of~\cite{mal08e}, our
  proof closely resembles the proofs of~\cite[Lemma~10 and
  Theorem~11]{mal08e} adjusted to equivalence classes.  We already
  argued that the initialization runs in time $\mathcal O \bigl(\abs F
  \bigr) \subseteq \mathcal O \bigl(\abs Q \bigr)$.  It is easy to see
  that $U \subseteq L$ at all times in the main loop
  [Line~6--\ref{ln:ind2}] of the algorithm.  Consequently, each block
  can be added at most once to~$U$ since it is added at the same time
  to~$L$ and only blocks not in~$L$ can be added to~$U$.  This yields
  that the main loop executes at most $\abs{(Q/\mathord{\cong})} \leq
  \abs Q$~times.  The inner loop [Line~9--\ref{ln:ind2}] can execute
  at most $\abs M$~times since each transition is considered at most
  once in the middle loop and at most once for each source state of
  the transition.  The statements in the inner loop all execute in
  constant time except for Line~\ref{ln:ind2}, which can be executed
  once for each state~$q \in Q$.  Overall, we thus obtain the running
  time~$\mathcal O \bigl(\abs M + \abs Q \bigr)$.

  Now let us prove the post-conditions.  By Lemma~\ref{lm:help} we
  know that signs of life are shared between elements in an
  equivalence class of~$\cong$.  The remaining statements are proved
  by induction along the outer main loop.  Initially, we set
  \[ \lambda(q) = 1 = h_\mu^{(2)}(q) = h_\mu^{(2)}(\SBox[q]) \] by
  Lines~\ref{ln:final1}--\ref{ln:final2}, which proves the
  post-condition because $\sol([q]_{\cong}) = \SBox$.  In the main
  loop, we set $\lambda(q) = \lambda(\mu^{(1)}(c[q])) \cdot
  \mu^{(2)}(c[q])$ in Line~\ref{ln:ind2}.  The equivalence class of
  $q' = \mu^{(1)}(c[q])$ has already been explored in a previous
  iteration because $q \cong q_i$, which by the congruence property
  yields $\mu^{(1)}(c[q]) \cong \mu^{(1)}(c[q_i])$ and the latter was
  in the explored equivalence class~$B$, which in turn yields that the
  former is in~$B$.  Consequently, we can employ the induction
  hypothesis and obtain $\lambda(q') = h_\mu^{(2)}(\sol(B)[q'])$.  In
  addition,
  \begin{align*}
    \lambda(q) &= \lambda(q') \cdot \mu^{(2)}(c[q]) =
    h_\mu^{(2)}(\sol(B)[q']) \cdot \mu^{(2)}(c[q]) \\
    &= h_\mu^{(2)}(\sol(B)[c[q]]) = h_\mu^{(2)}((\sol(B)[c])[q])
    \enspace,
  \end{align*}
  which proves the post-condition because $\sol([q]_{\cong}) =
  \sol([q_i]_{\cong}) = \sol(B)[c]$ by Line~\ref{ln:ind1}.  Clearly,
  $\sol([q]_{\cong})$ is a sign of life for~$q$, which proves that
  $q$~is live.  Finally, suppose that there is a live state~$q \in Q$
  such that $[q]_{\cong} \notin L$ (\ie, we assume a live state that
  is not classified as such by Algorithm~\ref{alg:sol}).  Since it is
  live, it has a sign of life~$c \in C_\Sigma(Q)$.  By induction
  on~$c$ we can prove that, when processing~$c[q]$, there exists a
  transition that uses a source state~$q_i$ such that $[q_i]_{\cong}
  \notin L$, whereas the target state~$q'$ is such that $[q']_{\cong}
  \in L$.\footnote{Such a switch must exist because all the final
    states are represented in~$L$.} However, since $[q']$~was
  explored, the considered transition was considered in the algorithm,
  which means that the equivalence class~$[q_i]_{\cong}$ was added
  to~$L$.  This contradicts the assumption, which shows that all
  states that are not represented in~$L$ are indeed dead.
\end{proof}

\begin{figure}[t]
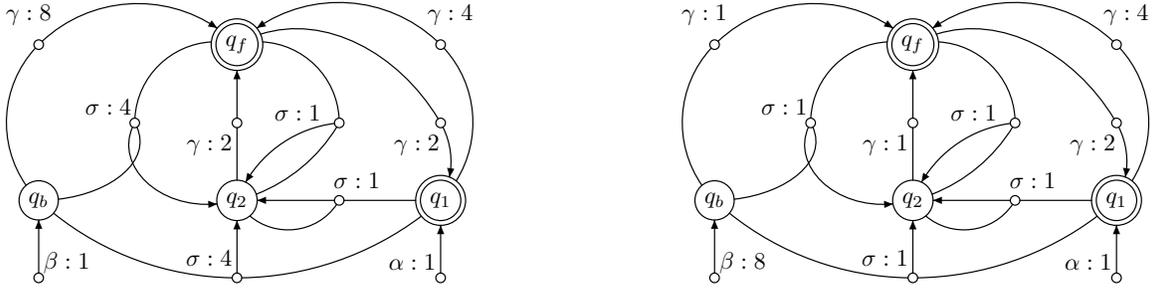

  \begin{center}
    \includegraphics[scale=0.9]{figure/example1.mps}
    \hfill
    \includegraphics[scale=0.9]{figure/example2.mps}
    \caption{Dwta over the rational numbers before (left) and after
      (right) pushing.}
    \label{fig.before}
    \label{fig.after}
  \end{center}
\end{figure}

\begin{exa}
  \label{ex:Example}
  Our example dwta~$N = (Q, \Sigma, \mu, F)$ is depicted left in
  Figure~\ref{fig.before}.  For any transition (small circle, the
  annotation specifies the input symbol and the weight separated by a
  colon), the arrow leads to the target state and the source
  states~$\seq q1k$ have been arranged in a counter-clockwise fashion
  starting from the target arrow.  For example, the bottom center
  transition labeled~$\sigma : 4$ in the left dwta of
  Figure~\ref{fig.before} corresponds to~$\mu(\sigma(q_b, q_1) \to
  q_2) = 4$; \ie, its target state is~$q_2$, its symbol is~$\sigma$,
  its source states are~$q_bq_1$ (in this order), and its weight
  is~$4$.  As usual, final states are doubly circled.  The graphical
  representation of wta is explained in detail in~\cite{bor04b}.  The
  coarsest congruence~$\cong$ respecting~$F 
  = \{q_1, q_f\}$ is represented by the set $\{\{q_1,q_f\},\,
  \{q_2,q_b\}\}$ of equivalence classes (\ie, partition).  We use 
  this congruence in Algorithm~\ref{alg:sol}.  First, the block~$F$ of
  final states is marked as live and added to~$U$.  It is assigned the
  trivial context~$\SBox$ as sign of life and each final state is
  assigned the trivial weight~$1$.  Clearly, we can only select one
  equivalence class~$B = F$ in the main loop.  Let us consider the
  transition~$\gamma(q_b) \to q_f$, whose target state~$q_f$ is
  in~$B$.  Since $[q_b]_{\cong} = \{q_2, q_b\}$~has not yet been
  marked as live, we add it to both $L$~and~$U$.  In addition, we set
  its sign of life to~$\gamma(\SBox)$.  Finally, we set the pushing
  weights to $\lambda(q_b) = \lambda(q_f) \cdot \mu^{(2)}(\gamma(q_b)
  \to q_f) = 8$ and $\lambda(q_2) = \lambda(q_f) \cdot
  \mu^{(2)}(\gamma(q_2) \to q_f) = 2$.  Now all states are live, so
  the loops will terminate.  Consequently, we have computed all signs
  of life and the pushing weights
  \[ \lambda(q_1) = \lambda(q_f) = 1 \qquad \lambda(q_2) = 2 \qquad
  \text{and} \qquad \lambda(q_b) = 8 \enspace. \]
\end{exa}

\section{Pushing}
\label{sec:Push}
The \textsc{Myhill-Nerode} congruence requires that there is a unique
scaling factor for every pair~$(q, q')$ of equivalent states.  Thus,
any fixed sign of life~$c$ for both $q$~and~$q'$ [for which
$\chi_F(h_\mu^{(1)}(c[q])) = 1 = \chi_F(h_\mu^{(1)}(c[q']))$] yields non-zero
weights $h_\mu^{(2)}(c[q])$~and~$h_\mu^{(2)}(c[q'])$, which can be used to
determine this unique scaling factor between $q$~and~$q'$.  In fact,
we already computed those weights $\lambda(q)$~and~$\lambda(q')$ in
Algorithm~\ref{alg:sol}.  By Lemma~\ref{lm:help}, states that are not
weakly equivalent (and thus might not have the same sign of life after 
executing Algorithm~\ref{alg:sol} with~$\sim_M$) also cannot be
equivalent.  For the remaining pairs of live
states, we computed a sign of life~$\sol([q]_{\sim_M})$ for the
equivalence class~$[q]_{\sim_M}$ of~$q$ in the previous section.  In
addition, we computed pushing weights $\lambda(q)$ and $\lambda(q')$.
Now, we will use these weights to normalize the wta by
\emph{pushing}~\cite{moh97,eis03,posgil09}.  Intuitively, pushing
cancels the scaling factor for equivalent states, which we will prove
in the next section.  In general, it just redistributes weights along
the transitions.  In weighted (finite-state) string
automata~\cite{sak09}, pushing is performed from the final states
towards the initial states~\cite{moh97}.  Since we work with bottom-up
wta~\cite{bor04b} (\ie, our notion of determinism is bottom-up), this
works analogously here by moving weights from the root towards the
leaves.  However, we introduce our notion of pushing for arbitrary,
not necessarily deterministic wta.  To this end, we lift the
corresponding definition~\cite[page~296]{moh97} from string to tree
automata.

\begin{quote}
  \emph{In this section, let~$M = (Q, \Sigma, \mu, F)$ be an arbitrary
    wta and $\lambda \colon Q \to S \setminus \{0\}$ be an arbitrary
    mapping such that $\lambda(q) = 1$ for every $q \in
    F$.}
\end{quote}

\begin{defi}
  \label{df:push}
  The \emph{pushed} wta $\push_\lambda(M)$ is $(Q, \Sigma, \mu', F)$
  such that for every $\sigma \in \Sigma_k$ and $q, \seq q1k \in Q$
  \[ \mu'(\sigma(\seq q1k) \to q) = \lambda(q) \cdot \mu(\sigma(\seq
  q1k) \to q) \cdot \prod_{i = 1}^k \lambda(q_i)^{-1} \enspace. \]
\end{defi}

The mapping~$\lambda$ indicates the pushed weights.  It is non-zero
everywhere and has to be~$1$ for final states because our model does
not have final weights.\footnote{As already mentioned, the restriction
to final states is a convenience and not an essential restriction.}
In the pushed wta~$\push_\lambda(M)$, the weight of every transition
leading to the state~$q \in Q$ is obtained from the weight of the
corresponding transition in~$M$ by multiplying the
weight~$\lambda(q)$.  To compensate, the weight of every transition
leaving the state~$q$ will cancel the weight~$\lambda(q)$ by
multiplying with~$\lambda(q)^{-1}$.  Thus, we expect an equivalent wta
after pushing, which we confirm by showing that
$M$~and~$\push_\lambda(M)$ are indeed equivalent.  The corresponding
statement for string automata is~\cite[Lemma~4]{moh97}.

\begin{prop}
  \label{prop:pushEquiv}
  The wta $M$~and~$\push_\lambda(M)$ are equivalent.  Moreover, if $M$
  is deterministic, then so is~$\push_\lambda(M)$.
\end{prop}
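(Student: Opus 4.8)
The plan is to prove the equivalence by showing that pushing leaves all the $h_\mu$-values of ``complete'' trees unchanged, while in general it rescales $h_\mu(t \to q)$ by exactly the factor $\lambda(q)$. Concretely, I would first establish by structural induction on $t \in T_\Sigma(Q)$ the identity
\[
  h_{\mu'}(t \to q) = \lambda(q) \cdot h_\mu(t \to q) \cdot \prod_{p \in Q} \lambda(p)^{-\#_p(t)} \enspace,
\]
where $\#_p(t)$ is the number of occurrences of the leaf $p$ in $t$; equivalently, for $t \in T_\Sigma$ (no state leaves at all) the statement reduces to $h_{\mu'}(t \to q) = \lambda(q) \cdot h_\mu(t \to q)$, which is the only case I actually need for the semantics. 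The base case $t = p \in Q$ is immediate from the definition of $h_\mu$ on state leaves (it is $1$ if $p = q$ and $0$ otherwise, and the $\lambda$-factors cancel). The inductive step $t = \sigma(\seq t1k)$ is where the bookkeeping happens: expand $h_{\mu'}(\sigma(\seq t1k) \to q)$ using Definition~\ref{df:push}, apply the induction hypothesis to each $h_{\mu'}(t_i \to q_i)$, and then use commutativity of $\cdot$ in the semifield to collect the $\lambda(q_i)$ and $\lambda(q_i)^{-1}$ factors so that they annihilate, leaving $\lambda(q)$ out front and the product over the leaves of the $t_i$ combined into the product over the leaves of $t$.

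From the special case $h_{\mu'}(t \to q) = \lambda(q) \cdot h_\mu(t \to q)$ for $t \in T_\Sigma$, the semantic equivalence follows immediately: since $\lambda(q) = 1$ for every $q \in F$ by the standing assumption on $\lambda$, we get
\[
  \push_\lambda(M)(t) = \sum_{q \in F} h_{\mu'}(t \to q) = \sum_{q \in F} \lambda(q) \cdot h_\mu(t \to q) = \sum_{q \in F} h_\mu(t \to q) = M(t)
\]
for every $t \in T_\Sigma$, which is exactly $\push_\lambda(M) = M$ as weighted tree languages.

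For the preservation of determinism, I would observe that determinism of a wta depends only on $\supp(\mu)$, i.e.\ on which transitions carry a non-zero weight, and that $\supp(\mu') = \supp(\mu)$: indeed $\mu'(\sigma(\seq q1k) \to q)$ is $\mu(\sigma(\seq q1k) \to q)$ multiplied by the non-zero scalars $\lambda(q)$ and $\lambda(q_i)^{-1}$, and since $S$ is a semifield it is zero-divisor free, so this product is zero exactly when $\mu(\sigma(\seq q1k) \to q) = 0$. Hence $\unw(\push_\lambda(M)) = \unw(M)$, and since $M$ is deterministic iff $\unw(M)$ is bottom-up deterministic, $\push_\lambda(M)$ is deterministic whenever $M$ is. I expect the main obstacle to be purely notational rather than conceptual: getting the exponents of the $\lambda(p)$-factors in the inductive invariant to line up correctly across the $k$ subtrees (in particular handling the empty-tuple case $k = 0$ and symbols occurring at several leaves), so the cleanest route is to phrase the invariant in terms of leaf-occurrence counts $\#_p$ as above and let the product bookkeeping take care of itself.
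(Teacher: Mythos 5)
Your proposal is correct and takes essentially the same approach as the paper: an induction on trees establishing $h_{\mu'}(t \to q) = \lambda(q) \cdot h_\mu(t \to q)$, the conclusion $\push_\lambda(M)(t) = M(t)$ via $\lambda(q) = 1$ for $q \in F$, and preservation of determinism via the support of~$\mu'$ (using zero-divisor freeness of the semifield). The only deviation is that you carry the more general invariant over $T_\Sigma(Q)$ with leaf-occurrence exponents $\#_p(t)$, whereas the paper inducts directly over $T_\Sigma$ (the base case being the nullary symbols), which is a harmless variation.
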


\begin{proof}
  Let $\push_\lambda(M) = M' = (Q, \Sigma, \mu', F)$.  The preservation of
  determinism is obvious because $\supp(\mu') \subseteq
  \supp(\mu)$.\footnote{In fact, $\supp(\mu') = \supp(\mu)$ because
    semifields are zero-divisor free~\protect{\cite[Lemma~1]{bor03}}.}
  We prove that $h_{\mu'}(t \to q) = \lambda(q) \cdot h_\mu(t \to q)$
  for every $t \in T_\Sigma$ and $q \in Q$ by induction on~$t$.  Let
  $t = \sigma(\seq t1k)$ for some $\sigma \in \Sigma_k$ and $\seq t1k
  \in T_\Sigma$.  By the induction hypothesis, we have $h_{\mu'}(t_i
  \to q_i) = \lambda(q_i) \cdot h_\mu(t_i \to q_i)$ for every $i \in
  [1, k]$ and $q_i \in Q$.  Consequently,
  \begin{align*}
    h_{\mu'}(t \to q) &= \sum_{\seq q1k \in Q} \mu'(\sigma(\seq q1k)
    \to q) \cdot \prod_{i = 1}^k h_{\mu'}(t_i \to q_i) \\*
    &= \sum_{\seq q1k \in Q} \lambda(q) \cdot \mu(\sigma(\seq q1k) \to
    q) \cdot \prod_{i = 1}^k \lambda(q_i)^{-1} \cdot \prod_{i = 1}^k
    \Bigl( \lambda(q_i) \cdot h_\mu(t_i \to q_i) \Bigr) \\*
    &= \lambda(q) \cdot h_\mu(t \to q) \enspace.
  \end{align*}
  We complete the proof as follows.
  \begin{align*}
    M'(t) &= \sum_{q \in F} h_{\mu'}(t \to q) =
    \sum_{q \in F} \lambda(q) \cdot h_\mu(t \to q) = \sum_{q
      \in F} h_\mu(t \to q) = M(t) \end{align*}
  because $\lambda(q) = 1$ for every $q \in F$.
\end{proof}

\begin{thm}
  \label{thm:push}
  The wta $\push_\lambda(M)$ is equivalent to~$M$ and can be obtained
  in time~$\mathcal O \bigl(\abs M \bigr)$.
\end{thm}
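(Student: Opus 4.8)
Equivalence of $M$ and $\push_\lambda(M)$ is already the first claim of Proposition~\ref{prop:pushEquiv}, so the plan is to concentrate entirely on the running time. The construction is the obvious one suggested by Definition~\ref{df:push}: since, as observed in the proof of Proposition~\ref{prop:pushEquiv}, $\supp(\mu') = \supp(\mu)$, the pushed automaton has exactly the transitions of~$M$, and for each transition $\sigma(\seq q1k) \to q \in \supp(\mu)$ we simply overwrite its weight by $\lambda(q) \cdot \mu(\sigma(\seq q1k) \to q) \cdot \prod_{i = 1}^k \lambda(q_i)^{-1}$. No other component of the tuple changes, so the only work is recomputing the $\abs{\supp(\mu)}$ transition weights.

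First I would fix the customary algebraic cost model for computations over a semifield: each of the operations $+$, $\cdot$, and $(\cdot)^{-1}$ takes constant time, and $\lambda$ is supplied in a representation (say, an array indexed by~$Q$) permitting constant-time evaluation of~$\lambda(q)$. Under this model, producing the new weight of a single transition $\sigma(\seq q1k) \to q$ costs $\mathcal O(k)$: reading $\mu(\sigma(\seq q1k) \to q)$ and $\lambda(q)$, forming the $k$~inverses $\lambda(q_1)^{-1}, \dotsc, \lambda(q_k)^{-1}$, performing the $k+1$~multiplications, and one constant-time write. Summing over all transitions and using that $\sigma(\seq q1k) \to q$ contributes $\abs{\sigma(\seq q1k)} + 1 = k + 2$ to~$\abs M$, we obtain
\[
  \sum_{\sigma(\seq q1k) \to q \in \supp(\mu)} \mathcal O(k+1) \subseteq
  \mathcal O\Bigl( \sum_{\sigma(\seq q1k) \to q \in \supp(\mu)} (k+2) \Bigr) =
  \mathcal O(\abs M) \enspace,
\]
which is the claimed bound. (If one additionally wished to materialize $\lambda$ on all of~$Q$ this would cost $\mathcal O(\abs Q)$, but here $\lambda$ is part of the input and is not counted.)

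The argument is almost entirely routine, so I do not expect a genuine obstacle; the only two points deserving explicit attention are (i)~the cost model, specifically that multiplicative inverses in~$S$ are available in constant time --- without this the bound fails for weight structures such as~$\mathbb Q$ measured in bit-complexity --- and (ii)~the bookkeeping observation that one must iterate over $\supp(\mu)$ rather than over all of $\Sigma(Q) \times Q$, which is precisely what keeps the total work proportional to~$\abs M$.
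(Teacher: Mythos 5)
Your proposal is correct and matches the paper's (largely implicit) argument: the equivalence is exactly Proposition~\ref{prop:pushEquiv}, and the linear running time follows from a single pass over $\supp(\mu)$ applying Definition~\ref{df:push}, with the per-transition cost $\mathcal O(k+1)$ charged against the $k+2$ that the transition contributes to~$\abs M$, under the usual unit-cost model for semifield operations. Nothing further is needed.
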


\begin{exa}
  Let us return to our example dwta~$N$ left in
  Figure~\ref{fig.before} and perform pushing.  The pushing
  weights~$\lambda$ are given in Example~\ref{ex:Example}.  We
  consider the transition~$\sigma(q_b, q_f) \to q_2$, which has
  weight~$4$ in~$N$.  In~$\push_\lambda(N)$ this transition has the weight
  \[ \lambda(q_2) \cdot \mu(\sigma(q_b, q_f) \to q_2) \cdot
  \lambda(q_b)^{-1} \cdot \lambda(q_f)^{-1} = 2 \cdot 4 \cdot 8^{-1}
  \cdot 1^{-1} = 1 \enspace. \] The dwta~$\push_\lambda(N)$ is
  presented right in Figure~\ref{fig.after}.  With a little effort, we
  can confirm that $q_2$~and~$q_b$ are equivalent
  in~$\push_\lambda(N)$, whereas $q_1$~and~$q_f$ are not.
\end{exa}

\section{Minimization}
\label{sec:Min}
Our main application of weight pushing is efficient dwta minimization,
which we present next.  The overall structure of our 
minimization procedure is presented in Algorithm~\ref{alg:Overall}.
As mentioned earlier, the coarsest congruence~$\sim_M$ for a dwta $M =
(Q, \Sigma, \mu, F)$ that respects~$F$ can be obtained by
minimization~\cite{hogmalmay08} of~$\unw(M)$.  We call this procedure
$\text{\textsc{ComputeCoarsestCongruence}}$ and supply it with a
dwta~$M$ and an equivalence relation.  It returns the coarsest
congruence (of~$M$) that refines the given equivalence relation.

\begin{quote}
  \emph{Let $M = (Q, \Sigma, \mu, F)$ be a dwta (without useless
    states) and $\lambda \colon Q \to S \setminus \{0\}$ be the
    pushing weights computed by Algorithm~\ref{alg:sol} when run on
    $M$~and~$\sim_M$.\footnote{In a dwta without useless states we
      have~$\abs Q \leq \abs M$.}  In addition, we let
    $\push_\lambda(M) = M' = (Q, \Sigma, \mu', F)$.}
\end{quote}

The dwta~$M'$ has the property that $(\mu')^{(2)}(\sigma(\seq q1k)) =
(\mu')^{(2)}(\sigma(\seq{q'}1k))$ for all $\sigma \in \Sigma_k$ and
states $q_i \equiv_M q'_i$ for every $i \in [1, k]$.  We will prove
this property~\eqref{eq:mn} in Lemma~\ref{lm:cong}.  It is this
property, which, in analogy to the string case~\cite{moh97,eis03},
allows us to compute the equivalence~$\mathord{\equiv_M} =
\mathord{\sim_N}$ on an unweighted fta~$N$, in which we treat the
transition weight as part of the input symbol.  For example, the
algorithm of~\cite{hogmalmay08} can then be used to
compute~$\mathord{\sim_N}$.  Finally, we merge the equivalent states 
using the information about the scaling factors contained in the
pushing weights~$\lambda$ in the same way as in~\cite{mal08e}.  Let
us start with the formal definitions.\footnote{We avoid a change of
  the weight structure from our semifield to the \textsc{Boolean}
  semifield~$\mathbb B$ since the multiplicative submonoid induced
  by~$\{0, 1\}$ is isomorphic to the multiplicative monoid of~$\mathbb
  B$.  Thus, our dwta with weights in~$\{0, 1\}$ compute in the same
  manner as a dwta over~$\mathbb B$ or equivalently a deterministic
  fta.}

\begin{algorithm}[t]
  \begin{algorithmic}[2]
    \REQUIRE a dwta~$M$ with states~$Q$
    \ENSURE return a minimal, equivalent dwta
    \smallskip\hrule\smallskip
    \STATE $\mathord{\sim_M} \gets
    \textsc{ComputeCoarsestCongruence}(M, Q \times Q)$
      \COMMENT{complexity: $\mathcal O \bigl(\abs M \log {\abs Q}
        \bigr)$}%
    \STATE $(L, \mathord{\sol}, \lambda) \gets \textsc{ComputeSoL}(M,
    \mathord{\sim_M})$
      \COMMENT{complexity: $\mathcal O \bigl(\abs M \bigr)$}%
    \STATE $N \gets \syn(\push_\lambda(M))$
      \COMMENT{complexity: $\mathcal O \bigl(\abs M \bigr)$}%
    \STATE $\mathord{\equiv_M} \gets
    \textsc{ComputeCoarsestCongruence}(N, \mathord{\sim_M})$ 
      \COMMENT{complexity: $\mathcal O \bigl(\abs M \log {\abs Q}
        \bigr)$}%
    \RETURN $\textsc{MergeStates}(M, \mathord{\equiv_M}, \lambda)$
      \COMMENT{complexity: $\mathcal O \bigl(\abs M \bigr)$}%
  \end{algorithmic}
  \caption{{\protect{Overall structure of our minimization algorithm;
        see~\protect{\cite{mal08e}} for details on the procedure
        \textsc{MergeStates}.  Note that the final merging is
        performed on the input dwta~$M$.  The alphabetic dwta~$N$ is
        only needed to compute the equivalence~$\equiv_M$.}}}
  \label{alg:Overall}
\end{algorithm}

\begin{defi}
  \label{df:Alp}
  Let $M = (Q, \Sigma, \mu, F)$ be a dwta, and let $S' = \{ \mu(\tau)
  \mid \tau \in \supp(\mu) \}$ be the finite set of non-zero weights
  that occur as transition weights in~$M$.  The \emph{alphabetic
    dwta}~$\syn(M)$ for~$M$ is $(Q, \Sigma \times S', \mu'', F)$,
  where
  \begin{itemize}
  \item $\rk(\langle \sigma, s\rangle) = \rk(\sigma)$ for every
    $\sigma \in \Sigma$ and $s \in S'$,
  \item $\mu''(\tau) = 1$ for every $\tau \in \supp(\mu'')$, and
  \item for every $\sigma \in \Sigma_k$, $s \in S'$, and $q, \seq q1k \in
    Q$
    \[ \mu''(\langle \sigma, s\rangle(\seq q1k) \to q) = 1 \quad \iff
    \quad \mu(\sigma(\seq q1k) \to q) = s \enspace. \] 
  \end{itemize}
\end{defi}




Clearly, the construction of~$\syn(M)$ can be performed in
time~$\mathcal O \bigl(\abs M \bigr)$.  Next, we show that the 
equivalence~$\equiv_M$ in~$M$ coincides with the
equivalence~$\sim_{\syn(M')}$ in~$\syn(M')$, where $M' =
\push_\lambda(M)$.  We achieve this proof by showing both inclusions.

\begin{lem}
  \label{lm:cong}
  The congruence~$\equiv_M$ of~$M$ is a congruence of~$\syn(M')$ that
  respects~$F$.
\end{lem}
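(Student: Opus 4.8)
The plan is to separate the statement into the trivial part, that $\equiv_M$ respects~$F$, and the real content, the weight-compatibility property~\eqref{eq:mn} asserting that $(\mu')^{(2)}(\sigma(\seq q1k)) = (\mu')^{(2)}(\sigma(\seq{q'}1k))$ whenever $q_i \equiv_M q'_i$ for all~$i$; the congruence property of~$\syn(M')$ then follows quickly from~\eqref{eq:mn}. That $\equiv_M$ respects~$F$ is immediate, since $\syn(M')$ has the same final states as~$M$ and $\equiv_M$ is already known to respect~$F$. For the reduction of the congruence property to~\eqref{eq:mn}: by Proposition~\ref{prop:pushEquiv} the wta~$M'$, and hence~$\syn(M')$, is deterministic, and a transition $\langle\sigma, s\rangle(\seq q1k) \to p$ lies in $\supp(\mu'')$ exactly when $(\mu')^{(2)}(\sigma(\seq q1k)) = s$, in which case $p = (\mu')^{(1)}(\sigma(\seq q1k)) = \mu^{(1)}(\sigma(\seq q1k))$ because $\supp(\mu') = \supp(\mu)$. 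Thus, given $q_i \equiv_M q'_i$, property~\eqref{eq:mn} makes the two transitions $\langle\sigma,s\rangle(\seq q1k) \to p$ and $\langle\sigma, s\rangle(\seq{q'}1k) \to p'$ simultaneously present or absent, and when present we have $p = \mu^{(1)}(\sigma(\seq q1k)) \equiv_M \mu^{(1)}(\sigma(\seq{q'}1k)) = p'$ since $\equiv_M$ is a congruence of~$M$; the absent case is covered by treating undefinedness as its own value. So it remains to establish~\eqref{eq:mn}.

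To prove~\eqref{eq:mn} I would change one argument at a time: it suffices to show $(\mu')^{(2)}(\sigma(\seq r1{j-1}, r, \seq r{j+1}k)) = (\mu')^{(2)}(\sigma(\seq r1{j-1}, r', \seq r{j+1}k))$ whenever $r \equiv_M r'$ and the remaining arguments are arbitrary, and then chain this $k$~times. Put $p = \mu^{(1)}(\sigma(\seq r1{j-1}, r, \seq r{j+1}k))$. If $p$ is undefined, then so is the target of the modified tuple (congruence of~$M$, undefinedness as a value) and both pushed weights are undefined; otherwise $p^* = \mu^{(1)}(\sigma(\seq r1{j-1}, r', \seq r{j+1}k))$ satisfies $p^* \equiv_M p$, hence $p^* \sim_M p$ by Lemma~\ref{lm:help}. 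Since $M$ has no useless states, $\lambda$ is total and, by the post-condition of Algorithm~\ref{alg:sol} run on $M$ and $\sim_M$, $\lambda(q) = h_\mu^{(2)}(\sol([q]_{\sim_M})[q])$ for every $q \in Q$, where $\sol$ assigns to each $\sim_M$-class a sign of life shared by all of its members (Lemma~\ref{lm:help}).

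The two facts I would then exploit are: (i) for $\equiv_M$-equivalent states $q, q'$ the scaling factor of the \textsc{Myhill-Nerode} relation equals $\lambda(q) \cdot \lambda(q')^{-1}$, obtained by instantiating its defining equation with the common sign of life $\sol([q]_{\sim_M}) = \sol([q']_{\sim_M})$ so that both characteristic-function terms become~$1$; and (ii) the evaluation of a dwta decomposes multiplicatively along a context, i.e.\ $h_\mu^{(1)}(c[t]) = h_\mu^{(1)}(c[h_\mu^{(1)}(t)])$ and $h_\mu^{(2)}(c[t]) = h_\mu^{(2)}(t) \cdot h_\mu^{(2)}(c[h_\mu^{(1)}(t)])$, a routine induction on~$c$ of the kind already used in the proof of Theorem~\ref{thm:sol}. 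Now instantiate the \textsc{Myhill-Nerode} equation for the pair $(r, r')$ with the context $c = \sol([p]_{\sim_M})[\sigma(\seq r1{j-1}, \SBox, \seq r{j+1}k)]$; this $c$ is a sign of life for both $r$ and $r'$, because $\sol([p]_{\sim_M})$ is one for both $p$ and $p^*$. Expanding each side with~(ii), using $h_\mu^{(2)}(\sol([p]_{\sim_M})[p]) = \lambda(p)$ and $h_\mu^{(2)}(\sol([p]_{\sim_M})[p^*]) = h_\mu^{(2)}(\sol([p^*]_{\sim_M})[p^*]) = \lambda(p^*)$, and substituting the factor from~(i), yields $\mu^{(2)}(\sigma(\seq r1{j-1}, r, \seq r{j+1}k)) \cdot \lambda(p) = \lambda(r) \cdot \lambda(r')^{-1} \cdot \mu^{(2)}(\sigma(\seq r1{j-1}, r', \seq r{j+1}k)) \cdot \lambda(p^*)$; multiplying both sides by $\lambda(r)^{-1} \cdot \prod_{l \neq j} \lambda(r_l)^{-1}$ and reading off Definition~\ref{df:push} on each side gives the desired equality of pushed weights.

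The main obstacle is the bookkeeping in this last computation — keeping the $\lambda$-factors, the shared signs of life, and the target states of the two nearly identical transitions aligned — together with the fact that the \textsc{Myhill-Nerode} relation only directly controls a single argument, which forces the coordinate-by-coordinate reduction and the auxiliary target state~$p^*$. The decomposition~(ii) is standard but is not stated explicitly in the paper, so it would be invoked as a small lemma.
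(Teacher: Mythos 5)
Your proposal is correct and follows essentially the same route as the paper's proof: reduce the congruence property for $\syn(M')$ to the weight equality~\eqref{eq:mn}, prove~\eqref{eq:mn} by changing one argument at a time, and for each step exploit that the $\equiv_M$-equivalent targets are $\sim_M$-equivalent (Lemma~\ref{lm:help}) and hence share the sign of life used to define~$\lambda$, so the \textsc{Myhill-Nerode} scaling factor is $\lambda(r)\cdot\lambda(r')^{-1}$ and the context decomposition of $h_\mu^{(2)}$ turns the defining equation into the equality of pushed weights via Definition~\ref{df:push}. Your facts~(i) and~(ii) are exactly the paper's equations \eqref{eq:m1}/\eqref{eq:m2} and the $(\ddagger)$ decomposition, and your final multiplication by the $\lambda^{-1}$-factors reproduces the paper's intermediate identity~\eqref{eq:mm}.
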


\begin{proof}
  Let $\syn(M') = (Q, \Sigma \times S', \mu'', F)$.  Since
  $M$~and~$\syn(M')$ have the same final states~$F$,
  $\equiv_M$~trivially respects~$F$ because it is a congruence
  of~$M$ that respects~$F$.  Naturally, $\equiv_M$ is an
  equivalence, so it remains to prove the congruence property
  for~$\syn(M')$.  Let $\sigma \in \Sigma_k$ and $q_i \equiv_M
  q'_i$ for every $i \in [1, k]$.  Then 
  \[ \mu^{(1)}(\sigma(\seq q1k)) \equiv_M
  \mu^{(1)}(\sigma(\seq{q'}1k)) \] because $\equiv_M$ is a congruence
  of~$M$.  For the moment, let us assume that\footnote{Mind that we
    compare the weights in~$M' = \push_\lambda(M)$ here.}
  \[ (\mu')^{(2)}(\sigma(\seq q1k)) = s =
  (\mu')^{(2)}(\sigma(\seq{q'}1k)) \enspace, \] then
  \begin{align*}
    &\phantom{{}\equiv_M{}} (\mu'')^{(1)}(\langle \sigma, s\rangle(\seq q1k))
    &&= (\mu')^{(1)}(\sigma(\seq q1k)) &&= \mu^{(1)}(\sigma(\seq q1k)) \\
    &\equiv_M \mu^{(1)}(\sigma(\seq{q'}1k)) &&=
    (\mu')^{(1)}(\sigma(\seq{q'}1k)) &&= (\mu'')^{(1)}(\langle \sigma,
    s\rangle(\seq{q'}1k)) \enspace.
  \end{align*} 
  For all the remaining combinations of~$\langle \sigma, s'\rangle$ we
  have that both $(\mu'')^{(1)}(\langle \sigma, s'\rangle(\seq q1k))$ and
  $(\mu'')^{(1)}(\langle \sigma, s'\rangle(\seq{q'}1k))$ are undefined
  and thus equal.  We have thus proved the congruence property given
  the assumption.  Consequently, it remains to show that the assumption
  \begin{equation}
    \label{eq:mn}
    (\mu')^{(2)}(\sigma(\seq q1k)) =
    (\mu')^{(2)}(\sigma(\seq{q'}1k)) 
  \end{equation}
  is true.  By Definition~\ref{df:push}, we have
  \begin{align}
    \label{eq:m3}
    (\mu')^{(2)}(\sigma(\seq q1k)) &= \lambda(\mu^{(1)}(\sigma(\seq
    q1k))) \cdot \mu^{(2)}(\sigma(\seq q1k)) \cdot \prod_{i = 1}^k 
    \lambda(q_i)^{-1} \\ 
    \label{eq:m4}
    (\mu')^{(2)}(\sigma(\seq{q'}1k)) &=
    \lambda(\mu^{(1)}(\sigma(\seq{q'}1k))) \cdot 
    \mu^{(2)}(\sigma(\seq{q'}1k)) \cdot \prod_{i = 1}^k
    \lambda(q'_i)^{-1} \enspace.
  \end{align}
  Now we prove that
  \begin{align}
    \notag
    &\phantom{{}={}} \lambda(\mu^{(1)}(c_j[q_j])) \cdot
    \mu^{(2)}(c_j[q_j]) \cdot \prod_{i = 1}^{j-1} \lambda(q'_i)^{-1}
    \cdot \prod_{i = j}^k \lambda(q_i)^{-1} \\
    \label{eq:mm}
    &= \lambda(\mu^{(1)}(c_j[q'_j])) \cdot \mu^{(2)}(c_j[q'_j]) \cdot
    \prod_{i = 1}^j \lambda(q'_i)^{-1} \cdot \prod_{i = j +1}^k
    \lambda(q_i)^{-1} 
  \end{align}
  for every $j \in [1, k]$, where $c_j = \sigma(\seq{q'}1{j-1}, \SBox,
  \seq q{j+1}k)$.  Let $p_j = \mu^{(1)}(c_j[q_j])$ and $p'_j =
  \mu^{(1)}(c_j[q'_j])$.  Since $q_j \equiv_M q'_j$, we also have that
  $p_j \equiv_M p'_j$ because $\equiv_M$~is a congruence of~$M$.  This
  yields that $p_j \sim_M p'_j$ by Lemma~\ref{lm:help}.  Let $c =
  \sol([p_j]_{\sim_M})$ be a sign of life for both $p_j$~and~$p'_j$.
  Moreover, we have a constant scaling factor between the equivalent
  states $q_j$~and~$q'_j$, which yields
  \begin{align}
    \label{eq:m1}
    \frac{\lambda(q_j)} {\lambda(q'_j)} &\stackrel{(\dagger)}=
    \frac{h_\mu^{(2)}(c[c_j[q_j]])} {h_\mu^{(2)}(c[c_j[q'_j]])}
    \stackrel{(\ddagger)}= \frac{h_\mu^{(2)}(c[p_j]) \cdot \mu^{(2)}(c_j[q_j])}
    {h_\mu^{(2)}(c[p'_j]) \cdot \mu^{(2)}(c_j[q'_j])} \\
    \label{eq:m2}
    \frac{\lambda(p_j)} {\lambda(p'_j)}
    &\stackrel{\phantom{(\dagger)}}= \frac{h_\mu^{(2)}(c[p_j])}
    {h_\mu^{(2)}(c[p'_j])} \enspace,
  \end{align} 
  where $(\dagger)$~holds because $c[c_j]$~is a sign of life for both
  $q_j$~and~$q'_j$ and $(\ddagger)$~holds essentially by definition.
  With these equations, let us inspect the main equality.
  \begin{align*}
    &\phantom{{}={}} \frac{\lambda(\mu^{(1)}(c_j[q_j])) \cdot
      \mu^{(2)}(c_j[q_j]) \cdot \prod_{i = 1}^{j-1} \lambda(q'_i)^{-1}
      \cdot \prod_{i = j}^k \lambda(q_i)^{-1}}
    {\lambda(\mu^{(1)}(c_j[q'_j])) \cdot \mu^{(2)}(c_j[q'_j]) \cdot
      \prod_{i = 1}^j \lambda(q'_i)^{-1} \cdot \prod_{i = j +1}^k
      \lambda(q_i)^{-1} } \\
    &= \frac{\lambda(p_j) \cdot \mu^{(2)}(c_j[q_j]) \cdot \lambda(q_j)^{-1}}
    {\lambda(p'_j) \cdot \mu^{(2)}(c_j[q'_j]) \cdot
      \lambda(q'_j)^{-1}} \stackrel{\eqref{eq:m2}}= 
    \frac{h_\mu^{(2)}(c[p_j]) \cdot \mu^{(2)}(c_j[q_j])}
    {h_\mu^{(2)}(c[p'_j]) \cdot \mu^{(2)}(c_j[q'_j])} \cdot
    \frac{\lambda(q'_j)} {\lambda(q_j)} \stackrel{\eqref{eq:m1}}= 1
  \end{align*}
  
  Now we are ready to return to the proof obligation expressed
  in~\eqref{eq:mn}.  We apply~\eqref{eq:mm} in total $k$~times to
  obtain the desired statement.
  \begin{align*}
    (\mu')^{(2)}(\sigma(\seq q1k)) 
    &\stackrel{\eqref{eq:m3}}= \lambda(\mu^{(1)}(\sigma(\seq
    q1k))) \cdot \mu^{(2)}(\sigma(\seq q1k)) \cdot \prod_{i = 1}^k 
    \lambda(q_i)^{-1} \\
    &\stackrel{\phantom{\eqref{eq:m3}}}= \lambda(\mu^{(1)}(c_1[q_1]))
    \cdot \mu^{(2)}(c_1[q_1]) \cdot \prod_{i = 1}^0 \lambda(q'_i)^{-1}
    \cdot \prod_{i = 1}^k \lambda(q_i)^{-1} \\
    &\stackrel{\eqref{eq:mm}}= \lambda(\mu^{(1)}(c_1[q'_1])) \cdot
    \mu^{(2)}(c_1[q'_1]) \cdot \prod_{i = 1}^1 \lambda(q'_i)^{-1}
    \cdot \prod_{i = 2}^k \lambda(q_i)^{-1} \\
    &\stackrel{\phantom{\eqref{eq:m3}}}= \lambda(\mu^{(1)}(c_2[q_2]))
    \cdot \mu^{(2)}(c_2[q_2]) \cdot \prod_{i = 1}^1 \lambda(q'_i)^{-1}
    \cdot \prod_{i = 2}^k \lambda(q_i)^{-1} \\
    &\dots \\
    &\stackrel{\eqref{eq:mm}}= \lambda(\mu^{(1)}(c_k[q'_k])) \cdot
    \mu^{(2)}(c_k[q'_k]) \cdot \prod_{i = 1}^k \lambda(q'_i)^{-1}
    \cdot \prod_{i = k+1}^k \lambda(q_i)^{-1} \\
    &\stackrel{\phantom{\eqref{eq:m4}}}=
    \lambda(\mu^{(1)}(\sigma(\seq{q'}1k))) \cdot
    \mu^{(2)}(\sigma(\seq{q'}1k)) \cdot \prod_{i = 1}^k
    \lambda(q'_i)^{-1} \\
    &\stackrel{\eqref{eq:m4}}= (\mu')^{(2)}(\sigma(\seq{q'}1k))
      \enspace,
  \end{align*}
  which completes the proof.
\end{proof}

\begin{thm}
  \label{thm:Main}
  We have $\mathord{\equiv_M} = \mathord{\sim_N}$, where $N = \syn(M')$.
\end{thm}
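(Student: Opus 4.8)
The plan is to prove the two inclusions separately. The inclusion $\mathord{\equiv_M} \subseteq \mathord{\sim_N}$ is immediate from Lemma~\ref{lm:cong}: it states that $\equiv_M$ is a congruence of~$N = \syn(M')$ that respects~$F$, and $\sim_N$ is the \emph{coarsest} such congruence of~$N$ (as recalled, via~\cite{gecste84}, in the proof of Lemma~\ref{lm:help}; note that $N$~is deterministic because $M'$~is by Proposition~\ref{prop:pushEquiv} and $\syn$ preserves determinism), so every congruence of~$N$ respecting~$F$ is contained in~$\sim_N$. The real content is the converse inclusion $\mathord{\sim_N} \subseteq \mathord{\equiv_M}$, and here I would first set up a routine correspondence between runs of~$N$ and runs of~$M'$. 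Write $\pi \colon \Sigma \times S' \to \Sigma$ for the first-component projection, extended to trees and contexts. By induction on~$\bar c \in C_{\Sigma \times S'}(Q)$ (and on trees over $\Sigma \times S'$ for the off-path subtrees) one checks that, for every $q \in Q$, the evaluation $h_{\mu''}^{(1)}(\bar c[q])$ in~$N$ is defined if and only if the evaluation $h_{\mu'}^{(1)}(\pi(\bar c)[q])$ in~$M'$ is defined \emph{and} uses at each node the transition whose weight is the $S'$-component prescribed by~$\bar c$; in that case $h_{\mu''}^{(1)}(\bar c[q]) = h_{\mu'}^{(1)}(\pi(\bar c)[q])$. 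In particular, whenever $h_{\mu'}^{(1)}(c'[q])$ is defined for a $\Sigma$-context~$c'$, there is a unique lift $\overline{c'}^{\,q} \in C_{\Sigma \times S'}(Q)$ with $\pi(\overline{c'}^{\,q}) = c'$ on which $N$~runs from~$q$ (obtained by recording, at every node, the weight taken by $q$'s run on~$c'$ in~$M'$), and $h_{\mu''}^{(1)}(\overline{c'}^{\,q}[q]) = h_{\mu'}^{(1)}(c'[q])$.

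Now take $q_1 \sim_N q_2$ and an arbitrary $\Sigma$-context~$c'$, and suppose $h_{\mu'}^{(1)}(c'[q_1]) \in F$. Then $h_{\mu''}^{(1)}(\overline{c'}^{\,q_1}[q_1]) \in F$, so by $q_1 \sim_N q_2$ also $h_{\mu''}^{(1)}(\overline{c'}^{\,q_1}[q_2]) \in F$; projecting back, $h_{\mu'}^{(1)}(c'[q_2]) \in F$ and, crucially, $q_2$'s run on~$c'$ uses at every node the \emph{same} transition weight as $q_1$'s run (the one recorded in $\overline{c'}^{\,q_1}$). Since for a deterministic wta the value $h_{\mu'}^{(2)}(t)$ of a defined evaluation is just the product of the weights of the transitions used (an immediate induction on~$t$), and these weights agree node by node for $q_1$~and~$q_2$ — on the path to~$\SBox$ by the previous sentence, off the path because those transitions do not depend on the substituted state — we obtain $h_{\mu'}^{(2)}(c'[q_1]) = h_{\mu'}^{(2)}(c'[q_2])$. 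By symmetry the same holds if instead $h_{\mu'}^{(1)}(c'[q_2]) \in F$; and if neither reaches~$F$ then both $\chi_F$-factors below vanish. Hence
\[ h_{\mu'}^{(2)}(c'[q_1]) \cdot \chi_F\bigl(h_{\mu'}^{(1)}(c'[q_1])\bigr) = h_{\mu'}^{(2)}(c'[q_2]) \cdot \chi_F\bigl(h_{\mu'}^{(1)}(c'[q_2])\bigr) \qquad \text{for every } c' \in C_\Sigma(Q) \enspace. \]

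It remains to transport this back from~$M'$ to~$M$. Since $\supp(\mu') = \supp(\mu)$ (see the footnote to Proposition~\ref{prop:pushEquiv}), we have $\unw(M) = \unw(M')$, hence $h_{\mu'}^{(1)} = h_\mu^{(1)}$ on all trees and in particular $\chi_F(h_{\mu'}^{(1)}(c'[q])) = \chi_F(h_\mu^{(1)}(c'[q]))$. Moreover, extending Definition~\ref{df:push} by a straightforward induction gives $h_{\mu'}^{(2)}(c'[q]) = \lambda\bigl(h_\mu^{(1)}(c'[q])\bigr) \cdot h_\mu^{(2)}(c'[q]) \cdot \lambda(q)^{-1} \cdot K_{c'}$, where $K_{c'}$ is the product of $\lambda(\cdot)^{-1}$ over the $Q$-labelled leaves of~$c'$ and so does not depend on~$q$; and $\lambda$~is~$1$ on final states (Algorithm~\ref{alg:sol}). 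Substituting this into the displayed identity, noting that the reached state lies in~$F$ exactly when the $\chi_F$-factor is~$1$ (so the factor $\lambda(h_\mu^{(1)}(c'[q]))$ then equals~$1$), and cancelling the common nonzero constant~$K_{c'}$, we get
\[ h_\mu^{(2)}(c'[q_1]) \cdot \chi_F\bigl(h_\mu^{(1)}(c'[q_1])\bigr) = \frac{\lambda(q_1)}{\lambda(q_2)} \cdot h_\mu^{(2)}(c'[q_2]) \cdot \chi_F\bigl(h_\mu^{(1)}(c'[q_2])\bigr) \qquad \text{for every } c' \in C_\Sigma(Q) \enspace, \]
which exhibits $s = \lambda(q_1) \cdot \lambda(q_2)^{-1} \in S \setminus \{0\}$ as a witness for $q_1 \equiv_M q_2$ (in the degenerate case that $q_1$, hence also $q_2$ by the symmetric form of the argument above, is dead, both sides are identically~$0$ and the same~$s$ works). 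This proves $\mathord{\sim_N} \subseteq \mathord{\equiv_M}$ and completes the proof.

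I expect the main obstacle to be the first step of the converse inclusion: making the run correspondence between $N = \syn(M')$ and~$M'$ precise, in particular the point that although the lift $\overline{c'}^{\,q}$ genuinely depends on the starting state~$q$, the hypothesis $q_1 \sim_N q_2$ nonetheless forces the two weight sequences along the runs of~$q_1$ and~$q_2$ on a common $\Sigma$-context~$c'$ to coincide. Everything afterwards is bookkeeping with the pushing formula of Definition~\ref{df:push} and the post-condition of Algorithm~\ref{alg:sol}.
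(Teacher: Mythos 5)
Your proposal is correct and follows essentially the same route as the paper: the inclusion $\mathord{\equiv_M} \subseteq \mathord{\sim_N}$ via Lemma~\ref{lm:cong} and the coarseness of~$\sim_N$, and the converse by observing that states weakly equivalent in~$\syn(M')$ must run with identical transition weights in~$M'$ (since the weights are carried by the symbols), hence are equivalent in~$M'$ with scaling factor~$1$, and then transferring back to~$M$ via the pushing formula with factor~$\lambda(q_1)\cdot\lambda(q_2)^{-1}$. You merely spell out (context lifting, telescoping of the pushed weights) the details that the paper's proof only sketches.
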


\begin{proof}
  Lemma~\ref{lm:cong} shows that $\equiv_M$~is a congruence of~$N$
  that respects~$F$.  Since $\sim_N$ is the coarsest congruence of~$N$
  that respects~$F$ by~\cite[Theorem~II.6.10]{gecste84}, we obtain
  that $\mathord{\equiv_M} \subseteq \mathord{\sim_N}$.  The converse
  is simple to prove as states that are weakly equivalent
  in~$\syn(M')$ share exactly the same signs of life with the scaling
  factor~$1$.  Since the signs of life already indicate the transition
  weights, we immediately obtain that such weakly equivalent states
  in~$\syn(M')$ have corresponding transitions with equal transition
  weights in~$M'$, which proves that those states are also equivalent
  in~$M'$ with the scaling factor~$1$.  The latter statement can then
  be used to prove that they are also equivalent in~$M$ (with a
  scaling factor that is potentially different from~$1$).
\end{proof}

The currently fastest dwta minimization algorithm~\cite{mal08e} runs
in time~$\mathcal O \bigl(\abs M \cdot \abs Q \bigr)$.  Our approach,
which relies on pushing and is presented in
Algorithm~\ref{alg:Overall}, achieves the same run-time~$\mathcal
O \bigl(\abs M \log{\abs Q} \bigr)$ as the fastest minimization
algorithms for deterministic fta.

\begin{cor}[see Algorithm~\ref{alg:Overall}]
  \label{cor:Main}
  For every dwta~$M = (Q, \Sigma, \mu, F)$, we can compute an
  equivalent minimal dwta in time~$\mathcal O \bigl(\abs M \log {\abs
    Q} \bigr)$.
\end{cor}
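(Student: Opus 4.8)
The plan is to verify that Algorithm~\ref{alg:Overall}, preceded by a standard clean-up step, produces a minimal equivalent dwta within the claimed time bound; since Theorems~\ref{thm:sol}, \ref{thm:push}, and~\ref{thm:Main} already carry the mathematical load, the remaining task is to assemble the pieces and add up the running times.

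First I would remove all \emph{useless} states from~$M$, i.e., states that are not bottom-up reachable or that are dead (Definition~\ref{df:SoL}). Both kinds are detectable by a linear-pass fixed-point computation over~$\supp(\mu)$, so this costs~$\mathcal O(\abs M)$, and afterwards every remaining state is live and $\abs Q \leq \abs M$. This matters because the pushing weights~$\lambda$ returned by Algorithm~\ref{alg:sol} are then defined on all of~$Q$, and because it lets us read every subsequent logarithmic factor as~$\mathcal O(\log \abs Q)$ rather than~$\mathcal O(\log \abs M)$. If no live state survives, $M$~recognizes the zero function and we simply return the one-state dwta with no final state and no transitions.

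Next I would bound the five lines of Algorithm~\ref{alg:Overall} individually. Line~1 computes~$\sim_M$ by unweighted minimization of~$\unw(M)$, which costs~$\mathcal O(\abs M \log \abs Q)$ by~\cite{hogmalmay08}. Line~2 calls \textsc{ComputeSoL}$(M, \mathord{\sim_M})$, which by Theorem~\ref{thm:sol} costs~$\mathcal O(\abs M + \abs Q) = \mathcal O(\abs M)$ and returns~$\lambda$ with $\lambda(q) = h_\mu^{(2)}(\sol([q]_{\sim_M})[q])$ and $\lambda(q) = 1$ for~$q \in F$, exactly as required by Definition~\ref{df:push}. Line~3 builds $N = \syn(\push_\lambda(M))$: by Theorem~\ref{thm:push} pushing costs~$\mathcal O(\abs M)$, and since pushing preserves~$\supp(\mu)$ (proof of Proposition~\ref{prop:pushEquiv}) while forming the alphabetic dwta only relabels symbols without changing ranks or tree sizes, we get $\abs N = \abs M$ in time~$\mathcal O(\abs M)$. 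Line~4 again runs \textsc{ComputeCoarsestCongruence}, now on~$N$ refining~$\sim_M$, at cost~$\mathcal O(\abs N \log \abs Q) = \mathcal O(\abs M \log \abs Q)$ by~\cite{hogmalmay08}. Line~5 merges states using~$\lambda$, which costs~$\mathcal O(\abs M)$ by~\cite{mal08e}. Summing and using $\abs Q \leq \abs M$ yields the overall bound~$\mathcal O(\abs M \log \abs Q)$.

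For correctness, the only point that needs care is that Line~4 actually returns~$\equiv_M$. By Theorem~\ref{thm:Main} we have $\mathord{\equiv_M} = \mathord{\sim_N}$ with $N = \syn(M')$, and by Lemma~\ref{lm:help} applied to~$M$ we have $\mathord{\equiv_M} \subseteq \mathord{\sim_M}$, so $\mathord{\sim_N} = \mathord{\equiv_M}$ is a congruence of~$N$ that refines~$\sim_M$; conversely, any congruence~$\cong$ of~$N$ with $\mathord{\cong} \subseteq \mathord{\sim_M}$ respects~$F$ (because $\sim_M$ does) and hence satisfies $\mathord{\cong} \subseteq \mathord{\sim_N}$ by Lemma~\ref{lm:help} applied to~$N$. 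Thus $\sim_N$ is the coarsest congruence of~$N$ refining~$\sim_M$, so Line~4 computes exactly~$\equiv_M$. Finally, feeding~$\equiv_M$ and~$\lambda$ to \textsc{MergeStates}~\cite{mal08e} returns the $\equiv_M$-quotient of~$M$ rescaled by~$\lambda$, which is a minimal dwta equivalent to~$M$ by the \textsc{Myhill-Nerode} theorem for dwta~\cite{mal08e}; together with Proposition~\ref{prop:pushEquiv} this also confirms semantic equivalence. The ``main obstacle'' is nothing deep --- it is precisely this bookkeeping about ``coarsest congruence refining~$\sim_M$'' versus ``coarsest congruence respecting~$F$''; everything else is a direct appeal to the stated theorems and the external complexity bounds of~\cite{hogmalmay08,mal08e}.
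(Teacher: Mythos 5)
Your proposal is correct and follows essentially the same route as the paper, which justifies Corollary~\ref{cor:Main} simply by assembling Algorithm~\ref{alg:Overall}: the per-line complexity bounds from Theorem~\ref{thm:sol}, Theorem~\ref{thm:push}, and~\cite{hogmalmay08,mal08e}, together with Theorem~\ref{thm:Main} for the correctness of the congruence computed in Line~4. Your extra bookkeeping --- the preprocessing to remove useless states and the explicit argument that the coarsest congruence of~$N$ refining~$\sim_M$ is exactly~$\sim_N = \mathord{\equiv_M}$ --- is sound and merely makes explicit what the paper leaves implicit.
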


\section{Testing equivalence}
\label{sec:equiv}
In this final section, we want to decide whether two given dwta are
equivalent.  To this end, let $M = (Q, \Sigma, \mu, F)$ and $M' = (Q',
\Sigma, \mu', F')$ be dwta.  The overall approach is presented in
Alg.~\ref{alg:Overall2}.  First, we compute a correspondence~$g
\colon Q \to Q'$ between states.  For every $q \in Q$, we compute a
tree~$t \in  T_\Sigma$, which is also called \emph{access tree
  for~$q$}, such that $h_\mu^{(1)}(t) = q$.  If no access tree exists,
then $q$~is not reachable and can be deleted.  A dwta, in which all
states are reachable, is called \emph{accessible}.  To avoid these
details, let us assume that $M$~and~$M'$ are accessible, which can
always be achieved in time~$\mathcal O \bigl(\abs M + \abs{M'}
\bigr)$.  In this case, we can compute an access tree~$a(q) \in
T_\Sigma$ for every state~$q \in Q$ in time~$\mathcal O \bigl(\abs M
\bigr)$ using standard breadth-first search, in which we unfold each
state (\ie, explore all transitions leading to it) at most once.  To
keep the representation efficient, we store the access trees in the
format~$\Sigma(Q)$, where each state~$q \in Q$ refers to its access
tree~$a(q)$.  To obtain the corresponding state~$g(q)$, we compute the
state of~$Q'$ that is reached when processing the access tree~$a(q)$.
Formally, $g(q) = h_{\mu'}^{(1)}(a(q))$ for every $q \in Q$.  This
computation can also be achieved in time~$\mathcal O \bigl(\abs M
\bigr)$ since we can reuse the results for the subtrees.
Consequently, we have that $h_\mu^{(1)}(a(q)) = q$ and
$h_{\mu'}^{(1)}(a(q)) = g(q)$ for every $q \in Q$.  Clearly, the
computation of the access trees~$a \colon Q \to T_\Sigma$ and the
correspondence~$g \colon Q \to Q'$ can be performed in time~$\mathcal
O \bigl(\abs M \bigr)$.  Next, we compute the coarsest congruences
$\sim_M$~and~$\sim_{M'}$ for $M$~and~$M'$ that respect $F$~and~$F'$,
respectively, and the signs of life for~$M$.

\begin{algorithm}[t]
  \begin{algorithmic}[2]
    \REQUIRE accessible dwta $M = (Q, \Sigma, \mu, F)$~and~$M' = (Q',
    \Sigma, \mu', F')$ 
    \ENSURE return `yes' if $M$~and~$M'$ are equivalent; `no' otherwise
    \smallskip\hrule\smallskip
    \STATE $g \gets \textsc{ComputeCorrespondence}(M, M')$
      \COMMENT{complexity: $\mathcal O \bigl(\abs M \bigr)$}%
    \STATE $\mathord{\sim_M} \gets
    \textsc{ComputeCoarsestCongruence}(M, Q \times Q)$
      \COMMENT{complexity: $\mathcal O \bigl(\abs M \log {\abs Q}
        \bigr)$}%
    \STATE $\mathord{\sim_{M'}} \gets 
    \textsc{ComputeCoarsestCongruence}(M', Q' \times Q')$
      \COMMENT{complexity: $\mathcal O \bigl(\abs{M'} \log {\abs{Q'}}
        \bigr)$}%
    \STATE $(L, \mathord{\sol}, \lambda) \gets \textsc{ComputeSoL}(M,
    \mathord{\sim_M})$
      \COMMENT{complexity: $\mathcal O \bigl(\abs M \bigr)$}%
    \IF{$g$ is not compatible with the congruences
      $\sim_M$~and~$\sim_{M'}$}
      \RETURN \textbf{no}
      \COMMENT{see Lemma~\protect{\ref{lm:compatible}}; complexity:
        $\mathcal O \bigl(\abs Q + \abs{Q'} \bigr)$}
    \ENDIF
    \FORALL{$q' \in Q'$}
      \STATE $\lambda'(q') = h_{\mu'}^{(2)}(c'[q'])$ with $c' =
      \ren_g(\sol(\overline g^{-1}([q']_{\sim_{M'}})))$
        \COMMENT{prepare pushing weights}%
    \ENDFOR
    \STATE $N \gets \textsc{Minimize}(\syn(\push_\lambda(M)),
    \mathord{\sim_M})$
      \COMMENT{complexity: $\mathcal O \bigl(\abs M \log {\abs Q}
        \bigr)$}%
    \STATE $N' \gets \textsc{Minimize}(\syn(\push_{\lambda'}(M')),
    \mathord{\sim_{M'}})$
      \COMMENT{complexity: $\mathcal O \bigl(\abs{M'} \log {\abs{Q'}}
        \bigr)$}%
    \RETURN $\textsc{Isomorphic?}(N, N')$
      \COMMENT{complexity: $\mathcal O \bigl(\abs N \bigr)$}%
  \end{algorithmic}
  \caption{{\protect{Overall structure of our equivalence test.}}}
  \label{alg:Overall2}
\end{algorithm}

\begin{lem}
  \label{lm:compatible}
  Let $M$~and~$M'$ be equivalent.  The correspondence~$g \colon Q \to
  Q'$ is compatible with the congruences $\sim_M$~and~$\sim_{M'}$;
  \ie, $g(q) \sim_{M'} g(p)$ if and only if $q \sim_M p$ for all $q, p
  \in Q$.  Moreover, for every reachable $q' \in Q'$ there exists $q
  \in Q$ such that $g(q) \in [q']_{\sim_{M'}}$.  Consequently, $g$ induces
  a bijection~$\overline g \colon (Q/\mathord{\sim_M}) \to
  (Q'/\mathord{\sim_{M'}})$ on the equivalence classes.
\end{lem}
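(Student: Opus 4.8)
The plan is to route everything through the single tree language $L := \supp(M) = \supp(M')$. These two supports coincide because $M \equiv M'$, and since $\supp(N) = L(\unw(N))$ for every dwta~$N$, this also gives $L = L(\unw(M)) = L(\unw(M'))$. I would first reduce to the case where $M$ and $M'$ are \emph{trim}, i.e.\ contain no dead states: deleting a dead state together with its incoming transitions preserves the recognized weighted tree language (any tree whose run uses a dead state already has weight~$0$) and keeps every live state reachable (an access tree of a live state cannot run through a dead state, for otherwise that dead state would inherit a sign of life through the part of the tree above it). After this reduction every state of $Q$ and $Q'$ is live and reachable, $g(q) = h_{\mu'}^{(1)}(a(q))$ is defined for every $q \in Q$ (because $a(q)$ extends to a tree in $\supp(M) = \supp(M')$, which $M'$ must process), and $h_\mu^{(1)}(a(q)) = q$ as well as $h_{\mu'}^{(1)}(a(q)) = g(q)$ by construction.

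For the equivalence itself I would use that $\sim_M$ is classical \textsc{Myhill-Nerode} equivalence on $\unw(M)$ and, since $M$ is accessible, may be tested with ground contexts via access trees: replacing every state-leaf of a context in $C_\Sigma(Q)$ by the corresponding access tree does not change which state is reached, so $q \sim_M p$ iff for every ground context $c \in C_\Sigma$ we have $c[a(q)] \in L \iff c[a(p)] \in L$; here I use determinism, i.e.\ that $h_\mu^{(1)}(c[t])$ depends only on $c$ and on the state $h_\mu^{(1)}(t)$ reached by~$t$, with the convention that an undefined value lies outside~$F$. The very same computation, now carried out in $M'$ (also accessible) and using $L = L(\unw(M'))$, shows $g(q) \sim_{M'} g(p)$ iff for every $c \in C_\Sigma$ we have $c[a(q)] \in L \iff c[a(p)] \in L$. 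As both characterizations state the identical condition on $a(q)$ and $a(p)$, the claimed equivalence $q \sim_M p \iff g(q) \sim_{M'} g(p)$ follows; I would write out the two elementary directions (restrict to ground contexts; conversely expand state-leaves to access trees).

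For the ``moreover'' part, let $q' \in Q'$ be reachable, say $h_{\mu'}^{(1)}(t) = q'$, and let $d \in C_\Sigma$ be a ground sign of life for $q'$ (obtained from Lemma~\ref{lm:help} and accessibility of $M'$), so $d[t] \in L(\unw(M')) = L = L(\unw(M))$. Then $h_\mu^{(1)}(d[t]) \in F$, which forces $h_\mu^{(1)}(t)$ to be defined; call this state $q$. Since $h_\mu^{(1)}(a(q)) = q = h_\mu^{(1)}(t)$, the same chain of equivalences through $L$ (read with the roles of $M$ and $M'$ interchanged) gives $h_{\mu'}^{(1)}(a(q)) \sim_{M'} h_{\mu'}^{(1)}(t)$, i.e.\ $g(q) \sim_{M'} q'$, so $g(q) \in [q']_{\sim_{M'}}$. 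Finally, the first part makes $\overline g([q]_{\sim_M}) := [g(q)]_{\sim_{M'}}$ well defined and injective, the ``moreover'' part together with accessibility of $M'$ (so that every class of $\sim_{M'}$ is $[q']_{\sim_{M'}}$ for a reachable $q'$) makes it surjective, hence $\overline g$ is a bijection. I expect the only genuine friction to be the bookkeeping around partiality---the possibility that a dwta gets stuck on a tree, and that a dead state of one automaton need not be mirrored in the other---which is exactly why the trimming reduction in the first paragraph is the load-bearing step; the rest is just unfolding the definitions of $\sim$, $h_\mu$ and $g$ against the one language~$L$.
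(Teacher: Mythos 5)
Your proof is correct and follows essentially the same route as the paper's: both arguments funnel everything through the common support $L=\supp(M)=\supp(M')$, reduce weak equivalence to a condition on \emph{ground} contexts applied to access trees (the paper's step~$(\star)$, which it delegates to \cite[Lemma~4]{mal08b}, is exactly what you argue directly by substituting access trees for state leaves using determinism and accessibility), and use the observation $(\dagger)$ that $h_\mu^{(1)}(c[q])\in F$ iff $c[t]\in\supp(M)$ whenever $h_\mu^{(1)}(t)=q$; the bijection $\overline g$ is then assembled exactly as you describe. The one substantive difference is your initial trimming step. The paper works only under accessibility and silently assumes all the relevant values exist (that $g(q)=h_{\mu'}^{(1)}(a(q))$ is defined for every $q\in Q$, and that $q=h_\mu^{(1)}(t)$ exists in the ``moreover'' part), which is only guaranteed when dead states are absent; your reduction to trim automata, with the correct justifications that trees in $\supp(M)$ never pass through dead states and that contexts containing dead-state leaves never accept (so the semantics, the reachability of live states, and $\sim$ restricted to live states are all unaffected), makes this hypothesis explicit and thereby handles the partiality issues that the paper's proof glosses over. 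Strictly speaking you prove the statement for the trimmed automata rather than for the given accessible ones, but that is the setting in which the lemma (and its use in Algorithm~\ref{alg:Overall2}) is actually sound, so this is a reasonable---arguably tighter---rendering of the same argument rather than a different one.
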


\begin{proof}
  Let $q, p \in Q$, and let $t = a(q)$ and $u = a(p)$ be the
  corresponding access trees.  Then
  \begin{align*}
    &\phantom{{}\iff{}} q \sim_M p \\
    &\iff \{c \in C_\Sigma(Q) \mid h_\mu^{(1)}(c[q]) \in F \} = \{c \in
      C_\Sigma(Q) \mid h_\mu^{(1)}(c[p]) \in F\} \\
    &\iff \{c \in C_\Sigma \mid h_\mu^{(1)}(c[q]) \in F \} = \{c \in
      C_\Sigma \mid h_\mu^{(1)}(c[p]) \in F\} \tag{$\star$} \\
    &\iff \{c \in C_\Sigma \mid c[t] \in \supp(M) \} = \{c \in
      C_\Sigma \mid c[u] \in \supp(M)\} \tag{$\dagger$} \\
    &\iff \{c \in C_\Sigma \mid c[t] \in \supp(M') \} = \{c \in
      C_\Sigma \mid c[u] \in \supp(M')\} \tag{since $M = M'$} \\
    &\iff \{c \in C_\Sigma \mid h_{\mu'}^{(1)}(c[g(q)]) \in F' \} =
      \{c \in C_\Sigma \mid h_{\mu'}^{(1)}(c[g(p)]) \in F'\}
      \tag{$\dagger$} \\ 
    &\iff \{c \in C_\Sigma(Q) \mid h_{\mu'}^{(1)}(c[g(q)]) \in F' \} =
      \{c \in C_\Sigma(Q) \mid h_{\mu'}^{(1)}(c[g(p)]) \in F'\}
      \tag{$\star$} \\ 
    &\iff g(q) \sim_{M'} g(p) \enspace,
  \end{align*}
  where $(\star)$~follows from \cite[Lemma~4]{mal08b} and
  $(\dagger)$ follows from the easy fact that $h_\mu^{(1)}(c[q]) \in
  F$ if and only if $c[t] \in \supp(M)$ for all $q \in Q$ and $t \in
  T_\Sigma$ such that $h_\mu^{(1)}(t) = q$. 

  For the second statement, let $q' \in Q'$ be a reachable state, and
  let $t \in T_\Sigma$ be such that $h_{\mu'}^{(1)}(t) = q'$.
  Clearly, we have 
  \begin{align*}
    &\phantom{{}\stackrel{(\dagger)}={}} \{c \in C_\Sigma \mid
      h_{\mu'}^{(1)}(c[q']) \in F'\} \stackrel{(\dagger)}= \{c \in
      C_\Sigma \mid c[t] \in \supp(M')\} = \{c \in C_\Sigma \mid
      c[t] \in \supp(M)\} \\
    &\stackrel{(\dagger)}= \{c \in C_\Sigma \mid
      h_\mu^{(1)}(c[q]) \in F\} \stackrel{(\dagger)}= \{c \in C_\Sigma \mid
      c[a(q)] \in \supp(M)\} \\
    &\stackrel{\phantom{(\dagger)}}= \{c \in C_\Sigma \mid c[a(q)] \in
      \supp(M')\} \stackrel{(\dagger)}= \{c \in C_\Sigma \mid
      h_{\mu'}^{(1)}(c[g(q)]) \in F'\}
  \end{align*}
  where $q = h_\mu^{(1)}(t)$.  Consequently, using~$(\star)$ we obtain
  $q' \sim_{M'} g(q)$.
\end{proof}

We just demonstrated that for equivalent dwta the correspondence~$g$
always yields a bijection $\overline g \colon (Q/\mathord{\sim_M}) \to
(Q'/\mathord{\sim_{M'}})$.  We can test the compatibility in
time~$\mathcal O \bigl(\abs Q + \abs{Q'} \bigr)$.  Next we transfer
the signs of life via~$\overline g$ to the equivalence classes
of~$\sim_{M'}$ and calculate the corresponding pushing weights for all
states~$q' \in Q'$.  Since the signs of life can contain states
of~$Q$, we need to rename them using the correspondence~$g$, so we use
the function~$\ren_g \colon T_\Sigma(Q \cup \{\SBox\}) \to T_\Sigma(Q'
\cup \{\SBox\})$, which is defined by $\ren_g(\SBox) = \SBox$, 
$\ren_g(q) = g(q)$ for all $q \in Q$, and $\ren_g(\sigma(\seq t1k)) =
\sigma(\ren_g(t_1), \dotsc, \ren_g(t_k))$ for all $\sigma \in
\Sigma_k$ and trees $\seq t1k \in T_\Sigma(Q \cup \{\SBox\})$.   We note
that $\ren_g(c) \in C_\Sigma(Q')$ for all $c \in C_\Sigma(Q)$. 

Using this approach corresponding equivalence classes receive the same
sign of life (modulo the renaming~$\ren_g$ of the states).  We then
minimize $M$~and~$M'$ using the method of Section~\ref{sec:Min} (\ie,
we perform pushing followed by unweighted minimization).  Finally, we
test the obtained deterministic fta for isomorphism.  

\begin{lem}
  \label{lm:equivtest}
  We use the symbols of Algorithm~\ref{alg:Overall2}.  Given a
  compatible correspondence~$g$, the dwta $M$~and~$M'$ are equivalent
  if and only if the deterministic unweighted fta
  $\syn(\push_\lambda(M))$ and $\syn(\push_{\lambda'}(M'))$ are
  equivalent.
\end{lem}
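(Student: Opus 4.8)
The strategy is to reduce the claimed equivalence of the two unweighted fta to a statement about the minimal dwta, using the results already proved in Section~\ref{sec:Min}. Recall from Theorem~\ref{thm:Main} that $\equiv_M = \sim_{\syn(\push_\lambda(M))}$, and likewise, provided the pushing weights $\lambda'$ chosen in Algorithm~\ref{alg:Overall2} genuinely satisfy the hypotheses of Lemma~\ref{lm:cong} for $M'$ (namely that they record the weight $h_{\mu'}^{(2)}(\sol([q']_{\sim_{M'}})[q'])$ in a sign of life of each state $q'$, and equal $1$ on final states), we get $\equiv_{M'} = \sim_{\syn(\push_{\lambda'}(M'))}$. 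Since quotienting by $\sim_{N}$ yields the minimal equivalent deterministic fta, the two fta $\syn(\push_\lambda(M))$ and $\syn(\push_{\lambda'}(M'))$ are equivalent (as unweighted tree languages) if and only if their minimizations are isomorphic, which in turn holds iff the minimal dwta of $M$ and the minimal dwta of $M'$ have the same transition \emph{structure and the same transition weights} on corresponding transitions.

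\textbf{Key steps.} First I would check the well-definedness of $\lambda'$: because $g$ is compatible with $\sim_M$ and $\sim_{M'}$ (Lemma~\ref{lm:compatible}), $\overline g^{-1}([q']_{\sim_{M'}})$ is a genuine $\sim_M$-class, so $\ren_g(\sol(\cdot))$ is a context in $C_\Sigma(Q')$; and since $\sol$ is a sign of life in $M$ and $g$ preserves the transition structure up to $\sim$, the renamed context is a sign of life for $q'$ in $M'$ whenever $q'$ is live, and it is $\SBox$ on final classes so $\lambda'(q')=1$ there. Hence $\lambda'$ meets the hypotheses of the pushing results and Lemma~\ref{lm:cong}/Theorem~\ref{thm:Main} apply to $M'$, giving $\equiv_{M'}=\sim_{\syn(\push_{\lambda'}(M'))}$. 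Second, for the ``if'' direction: assume the two unweighted fta are equivalent. Then their minimal dfta are isomorphic; pulling this isomorphism back through $\syn$ and $\push$, corresponding transitions in the two minimal dwta carry the same semifield weight, and since the final states match (pushing fixes $F$, minimization respects $F$), the minimal dwta are genuinely isomorphic as \emph{weighted} automata, hence recognize the same weighted tree language; by Proposition~\ref{prop:pushEquiv} so do $M$ and $M'$. Third, for the ``only if'' direction: assume $M=M'$ as weighted tree languages. Then their minimal dwta are isomorphic (uniqueness of the minimal dwta). The crucial point is that the \emph{particular} pushing weights $\lambda,\lambda'$ chosen make pushing agree on corresponding transitions — this is exactly where the careful, $g$-coordinated choice of signs of life pays off: corresponding equivalence classes receive the same sign of life (modulo $\ren_g$), so the scaling factor between corresponding states is $1$ after pushing, and therefore $\push_\lambda(M)$ and $\push_{\lambda'}(M')$ have literally equal weights on corresponding transitions. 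Applying $\syn$ then produces isomorphic unweighted fta, which are in particular equivalent.

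\textbf{Main obstacle.} The delicate part is the ``only if'' direction, specifically establishing that the scaling factor between a state $q$ of $M$ and its counterpart $g(q)$ (or rather, between corresponding states of the minimal dwta) is exactly $1$ after pushing with these specific weights — not merely that the minimal dwta are equivalent up to \emph{some} consistent scaling. This requires combining: (i) the definition $\lambda(q)=h_\mu^{(2)}(\sol([q]_{\sim_M})[q])$ and the analogous identity for $\lambda'$; (ii) the fact that $\sol([q]_{\sim_M})$ and $\ren_g(\sol(\overline g^{-1}([g(q)]_{\sim_{M'}})))$ are the \emph{same} context after renaming; and (iii) the observation that for equivalent dwta, evaluating a common sign of life gives weights related by the (unique) Myhill--Nerode scaling factor, so dividing by these weights via pushing normalizes everything to $1$. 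I expect this to follow the template of the string-automata argument of \textsc{Mohri} and \textsc{Eisner}, but the bookkeeping with contexts, the renaming $\ren_g$, and the interplay of $\sim_M$, $\equiv_M$, and $g$ is where the real work lies; everything else is a routine application of Theorem~\ref{thm:Main}, Proposition~\ref{prop:pushEquiv}, and uniqueness of minimal deterministic (un)weighted tree automata.
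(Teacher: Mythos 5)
Your proposal is correct and takes essentially the same route as the paper: the ``only if'' direction rests on precisely the point the paper invokes, namely an adaptation of the equality~\eqref{eq:mn} from Lemma~\ref{lm:cong} using that corresponding classes receive the same sign of life modulo~$\ren_g$, so that the coordinated pushing weights make corresponding transition weights literally equal, while the ``if'' direction follows because the weights are carried on the annotated symbols together with Proposition~\ref{prop:pushEquiv}. Your detour through isomorphism of the minimal dfta/dwta in the ``if'' direction is a bit more roundabout than the paper's direct reading of the weighted language off the annotated deterministic fta, but it is the same argument in substance.
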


\begin{proof}
  Clearly, if the deterministic fta
  $\syn(\push_\lambda(M))$~and~$\syn(\push_{\lambda'}(M'))$ are
  equivalent, then also $\push_\lambda(M)$~and~$\push_{\lambda'}(M')$
  are equivalent since the weights are annotated on the symbols of the
  former devices.  Moreover, since pushing preserves the semantics
  (see Proposition~\ref{prop:pushEquiv}), also the dwta $M$~and~$M'$
  are equivalent, which concludes one direction.  For the other
  direction, let $M$~and~$M'$ be equivalent.  Then also
  $\push_\lambda(M)$~and~$\push_{\lambda'}(M')$ are equivalent due to
  Proposition~\ref{prop:pushEquiv}.  An easy adaptation of the proof
  (of the equality~\eqref{eq:mn} of the transition weights) of
  Lemma~\ref{lm:cong} can be used to show that the transition weights
  of corresponding transitions are equal and hence
  $\syn(\push_\lambda(M))$~and~$\syn(\push_{\lambda'}(M'))$ are
  equivalent. 
\end{proof}

Lemma~\ref{lm:equivtest} proves the correctness of
Algorithm~\ref{alg:Overall2} because the minimal deterministic fta for
a given tree language is unique (up to
isomorphism)~\cite[Theorem~2.11.12]{gecste84}.  The run-time of our
algorithm should be compared to the previously (asymptotically)
fastest equivalence test for dwta of~\cite{drehogmal09b}, which runs
in time~$\mathcal O \bigl(\abs{M} \cdot \abs{M'} \bigr)$.

\begin{thm}
  \label{thm:Main2}
  We can test equivalence of $M$~and~$M'$ in time~$\mathcal O
  \bigl((\abs M + \abs{M'}) \log {(\abs Q + \abs{Q'})} \bigr)$.
\end{thm}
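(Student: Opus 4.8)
The plan is to argue that each line of Algorithm~\ref{alg:Overall2} runs within the claimed bound and that the algorithm is correct; correctness has essentially been established by Lemma~\ref{lm:equivtest} together with the uniqueness of the minimal deterministic fta, so the real content here is the run-time accounting. First I would make the standing assumption that $M$~and~$M'$ are accessible, which by the discussion preceding Algorithm~\ref{alg:Overall2} costs only $\mathcal O(\abs M + \abs{M'})$, and note that for accessible dwta we have $\abs Q \leq \abs M$ and $\abs{Q'} \leq \abs{M'}$ (each state is the target of at least one transition), so the logarithmic factors $\log\abs Q$ and $\log\abs{Q'}$ are both dominated by $\log(\abs Q + \abs{Q'})$. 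With this in hand I would walk through the lines: \textsc{ComputeCorrespondence} in $\mathcal O(\abs M)$ via the breadth-first access-tree construction described in the text; the two calls to \textsc{ComputeCoarsestCongruence} in $\mathcal O(\abs M \log\abs Q)$ and $\mathcal O(\abs{M'} \log\abs{Q'})$ by~\cite{hogmalmay08}; \textsc{ComputeSoL} on $M$ in $\mathcal O(\abs M + \abs Q) \subseteq \mathcal O(\abs M)$ by Theorem~\ref{thm:sol}; the compatibility test in $\mathcal O(\abs Q + \abs{Q'})$ by Lemma~\ref{lm:compatible}; the \textsc{Minimize} calls, each of which is pushing ($\mathcal O(\abs M)$ by Theorem~\ref{thm:push}) followed by alphabetic-dwta construction ($\mathcal O(\abs M)$) followed by unweighted minimization ($\mathcal O(\abs M \log\abs Q)$ by Corollary~\ref{cor:Main} / \cite{hogmalmay08}); and finally the isomorphism test \textsc{Isomorphic?} in $\mathcal O(\abs N)$, noting $\abs N \leq \abs{\syn(\push_\lambda(M))} = \mathcal O(\abs M)$ since the minimized automaton is no larger than its input. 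Summing the dominant terms gives $\mathcal O\bigl((\abs M + \abs{M'})\log(\abs Q + \abs{Q'})\bigr)$.

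The one step that needs genuine care — and which I expect to be the main obstacle — is the computation of the pushing weights $\lambda'(q') = h_{\mu'}^{(2)}(c'[q'])$ with $c' = \ren_g(\sol(\overline g^{-1}([q']_{\sim_{M'}})))$ for all $q' \in Q'$ in the loop of Algorithm~\ref{alg:Overall2}, since a naive evaluation could evaluate each of the $\abs{Q'}$ states in a sign of life of size up to $\mathcal O(\abs{Q'})$, costing $\mathcal O(\abs{Q'}^2)$. The way around this is exactly the trick underlying Algorithm~\ref{alg:sol}: the signs of life $\sol(B)$ for the blocks $B \in (Q'/\mathord{\sim_{M'}})$ form a forest (each block's sign of life extends its parent's by a single transition, by Line~\ref{ln:ind1} of Algorithm~\ref{alg:sol}), so rather than evaluating each $c'[q']$ from scratch one computes $\lambda'$ incrementally along this forest. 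Concretely, transferring the $\sol$-forest for $M$ through $\ren_g$ and $\overline g$ yields a corresponding $\sol$-forest on $(Q'/\mathord{\sim_{M'}})$, and one runs the same recurrence as in Line~\ref{ln:ind2} of Algorithm~\ref{alg:sol} but with $\mu'$ in place of $\mu$: for the root blocks (images of $F$) set $\lambda'(q') = 1$, and for a child block reached from parent~$B$ by context $c$, set $\lambda'(q') = \lambda'\bigl((\mu')^{(1)}(c[q'])\bigr) \cdot (\mu')^{(2)}(c[q'])$ for each $q' \in Q'$ in that block. By the same argument as in the proof of Theorem~\ref{thm:sol}, the target $(\mu')^{(1)}(c[q'])$ lies in the already-processed parent block (because $q' \sim_{M'}$ the distinguished source state and $\sim_{M'}$ is a congruence), so the recurrence is well-founded and each state is touched $\mathcal O(1)$ times per incident transition, giving total cost $\mathcal O(\abs{M'})$. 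One also has to check $\lambda'(q') \neq 0$: this holds because $c'$ is a sign of life for $q'$ (its image-block is in $L$) and the semifield is zero-divisor free, so $h_{\mu'}^{(2)}(c'[q'])$ is a product of non-zero weights.

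Having dealt with that subtlety, the remaining routine points are: the compatibility check genuinely rejects exactly the non-equivalent inputs for which $g$ fails to induce a bijection $\overline g$ (Lemma~\ref{lm:compatible}), so when the algorithm proceeds past it we may assume $\overline g$ exists and hence $\overline g^{-1}$ is well-defined in the loop; and for inputs that pass the check, Lemma~\ref{lm:equivtest} tells us $M \equiv M'$ iff the alphabetic pushed automata $\syn(\push_\lambda(M))$ and $\syn(\push_{\lambda'}(M'))$ are equivalent, iff (both being deterministic fta) their minimizations $N$~and~$N'$ are isomorphic by uniqueness of the minimal deterministic fta~\cite[Theorem~2.11.12]{gecste84}. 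Thus the final \textsc{Isomorphic?} call returns `yes' exactly when $M$~and~$M'$ are equivalent. Combining correctness with the line-by-line run-time bounds above completes the proof.
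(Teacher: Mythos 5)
Your proposal matches the paper's own (implicit) argument: Theorem~\ref{thm:Main2} is justified there precisely by the line-by-line complexity annotations of Algorithm~\ref{alg:Overall2} together with Lemma~\ref{lm:equivtest} and the uniqueness of the minimal deterministic fta, which is exactly your decomposition into correctness plus run-time accounting. Your forest-based incremental computation of the pushing weights $\lambda'$ in time $\mathcal O\bigl(\abs{M'}\bigr)$ fills in a detail the paper leaves implicit (that line of Algorithm~\ref{alg:Overall2} carries no complexity annotation), and it is consistent with how the paper computes $\lambda$ in Algorithm~\ref{alg:sol}.
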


\section*{Acknowledgments}
The authors gratefully acknowledge the insight and suggestions
provided by the reviewers of the conference and the current version.

\bibliographystyle{alpha}
\bibliography{extra}

\end{document}